\newtheorem{thm}{Theorem}[section]
\newtheorem{lemma}{Lemma}[section]
\newcommand{\vx}{{\bf x}}
\newcommand{\vX}{{\bf X}}
\newcommand{\vDelta}{\mbox{\boldmath $\Delta$}}
\newcommand{\vbeta}{\mbox{\boldmath $\beta$}}
\newcommand{\vxi}{\mbox{\boldmath $\xi$}}
\newcommand{\vB}{{\bf B}}
\newcommand{\vm}{{\bf m}}
\newcommand{\Var}{\mbox{Var}}
\newcommand{\pkonv}{\stackrel{p}{\rightarrow}}
\newcommand{\dkonv}{\stackrel{d}{\rightarrow}}
\newcommand{\askonv}{\stackrel{a.s.}{\longrightarrow}}
\newcommand{\bqa}{\begin{eqnarray*}}
\newcommand{\eqa}{\end{eqnarray*}}
\newcommand{\bqan}{\begin{eqnarray}}
\newcommand{\eqan}{\end{eqnarray}}
\newcommand{\bit}{\begin{itemize}}
\newcommand{\eit}{\end{itemize}}
\newcommand{\ben}{\begin{enumerate}}
\newcommand{\een}{\end{enumerate}}
\newcommand{\beq}{\begin{equation}}
\newcommand{\eeq}{\end{equation}}
\newcommand{\bdes}{\begin{description}}
\newcommand{\edes}{\end{description}}
\renewcommand{\baselinestretch}{1.7}
\begin{document}

\title{Nonparametric Model Checking and Variable Selection}
\author{\sc Adriano Zambom and Michael Akritas\\
The Pennsylvania State University}
\date{December 21, 2011}
\maketitle{}
    \vspace{-2cm}
\section*{Abstract}
\indent \indent Let $\vX$ be a $d$ dimensional vector of covariates and $Y$ be the response variable. Under the nonparametric model $Y=m(\vX)+\sigma(\vX)\epsilon$ we develop an ANOVA-type test for the null hypothesis that a particular coordinate of $\vX$ has no influence on the regression function. The asymptotic distribution of the test statistic, using residuals based on Nadaraya-Watson type kernel estimator and $d\le 4$,  is established under the null hypothesis and local alternatives. Simulations suggest that under a sparse model, the applicability of the test extends to arbitrary $d$ through sufficient dimension reduction. Using p-values from this test, a variable selection method based on multiple testing ideas is proposed.  The proposed test outperforms existing procedures, while  additional simulations reveal that the proposed variable selection method performs competitively against well established procedures. A real data set is analyzed.

\vspace{.5cm}


\textbf{Keywords:} Nonparametric regression; kernel regression; Lack-of-fit tests;
Dimension reduction; Backward elimination.

\vspace{2mm}
 
\noindent{\bf Acknowledgments}: This research was partially supported by CAPES/Fulbright grant 15087657 and NSF grant DMS-0805598.


\pagestyle{plain}
\setcounter{page}{1}
\setlength{\textheight}{9.0in}
\setlength{\topmargin}{-0.5in}

\section{Introduction}  \label{sec:intro}
\indent \indent For a response variable $Y$ and a $d$ dimensional vector of the available covariates $\vX$ set $m(\vX)=E(Y|\vX)$. 
The dual problems of testing for the significance of a particular covariate, and identification of the set of relevant covariates are very common both in applied research and in methodological investigations. Due to readily available software, these tasks are often performed under the assumption of a  linear model, $m(\vX)=\vX\vbeta$. Model checking fits naturally in the methodological context of hypothesis testing, while variable selection is typically addressed through minimization of a constrained or penalized objective function, such as Tibshirani's (1996) LASSO, Fan and Li's (2001) SCAD, Efron, Hastie, Johnstone and Tibshirani's (2004) least angle regression, Zou's (2006) adaptive LASSO, and 
Candes and Tao's (2007) Dantzig selector.

At a conceptual level, however, the two problems are intimately connected: dropping variable $j$ from the model is equivalent to not rejecting the null hypothesis $H_0^j: \beta_j=0$. Abramovich, Benjamini, Donoho and Johnstone (2006) bridged the methodological divide by showing that application of the false discovery rate (FDR) controlling procedure of Benjamini and Hochberg (1995) on $p$ values resulting from testing each $H_0^j$ can be translated into minimizing a model selection criterion of the form 
 \bqan\label{Objective_fun.Penal_LS}
 \sum_{i=1}^n\left(Y_i - \sum_{j\in S} \widehat\beta_j^Sx_{ij}\right)^2 + \sigma^2 |S|\lambda,
 \eqan
where $S$ is a subset of $\{1,2,\ldots,d\}$ specifying the model,  
$\widehat\beta_i^S$ denotes the least squares estimator from fitting model $S$, $|S|$ is the cardinality of the subset $S$, and the penalty 
parameter $\lambda$ depends both on $d$ and $|S|$. This is similar to penalty parameters 
used in Tibshirani and Knight (1999), Birge and Massart (2001) and Foster and Stine (2004), which also depend on both 
$d$ and $|S|$, and more flexible than the proposal in Donoho and Johnstone (1994) which uses $\lambda$ depending only on $d$, as well as AIC and Mallow's 
C$_p$ which use constant $\lambda$. 

Working with orthogonal designs, Abramovich et al. (2006) showed that the global minimum of the penalized least squares 
(\ref{Objective_fun.Penal_LS}) with the FDR penalty parameter
is asymptotically minimax for $\ell^r$ loss, $0<r\le 2$, simultaneously throughout a range of sparsity classes, provided the level $q$ for the FDR is set to $q<0.5$. Generalizations of this methodology to non-orthogonal designs differ mainly in the generation of the $p$ values for testing $H_0^j: \beta_j=0$, and the FDR method employed. Bunea, Wegkamp and Auguste  (2006) use $p$ values generated from the standardized regression coefficients resulting from fitting the full model and employ Benjamini and Yekuteli's (2001) method for controlling FDR under dependency, while  Benjamini and Gavrilov (2009) use $p$ values from a forward selection procedure where the $i$th stage $p$-to-enter is the $i$th stage constant in the multiple-stage FDR procedure in Benjamini, Krieger and Yekutieli (2006).

Model checking and variable selection procedures based on the assumption that the regression function is linear may fail to discern the relevance of covariates whose effect on $m(\vx)$ is nonlinear. See Tables \ref{table:stat1_LV_FL} and \ref{table:LASSO2_sin}.  Because of this, procedures for both model checking and variable selection have been developed under more general/flexible models. See, for example Li and Liang (2008), Wang and Xia (2008), Huang, Horowitz and Wei (2010), 
Storlie, Bondell, Reich and Zhang (2011), and references therein.
 However, the methodological approaches for variable selection under these more flexible models have been distinct from those of model checking. 
 
 This paper aims at showing that a suitably flexible and powerful 
nonparametric model checking procedure can be used to construct a competitive
nonparametric variable selection procedure by exploiting the aforementioned conceptual connection between model checking and variable selection. Thus, this paper has two objectives. The first is to develop a procedure for testing whether a
particular covariate contributes to the regression function in the context of 
a heteroscedastic nonparametric regression model. The second objective is to propose a variable selection procedure based on backward elimination using the 
Benjamini and Yekuteli (2001) method 
applied on the $d$ p-values resulting from testing for the significance of each covariate.

\indent In Section \ref{sec:main}, we formally describe the model and introduce the hypothesis, the test statistic and its asymptotic distribution under the null hypothesis and local alternatives. Section \ref{sec:simulations:model_checking} presents the results of a simulation study where the performance of the proposed test statistic is compared to those of existing tests. In Section \ref{section:Variable_Selection} the proposed variable selection procedure is described and compared, in simulation studies and a real data set, to well established variable selection methods. 

\vspace{-3mm}

\section{Nonparametric Model Checking}  \label{sec:main}
\vspace{-2mm}
\subsection{The Hypothesis and the Test Statistic}\label{test.stat}

\indent \indent Let $Y$ be the response variable and $\vX=(X_1,\ldots,X_d)$ the vector of  available covariates. Set $m(\vX)=E(Y|\vX)$ for the regression function and define 
\bqan\label{rel.true.res}
\zeta=Y-m(\vX). 
\eqan
From its definition if follows that $E(\zeta|\vX)=E(\zeta)=E(\zeta|X_j)=0$, for all $j=1,\ldots,d$. Setting $\sigma^2(\vX)=\Var(\zeta|\vX)$, we have the model
 \bqan\label{rel.np.model}
 Y=m(\vX)+\sigma(\vX)\epsilon,
 \eqan
where $\epsilon$ is the standardized error $\zeta$.
Based on a sample $(Y_i,\vX_i), i=1,\ldots,n$,
of iid observations from model (\ref{rel.np.model}), we will consider testing the hypothesis that the regression function does not depend on the $j$th covariate. For simplicity in notation we set $\vX=(\vX_1,X_2)$, where $\vX_1$ is of dimension $(d-1)$ and $X_2$ is univariate. Setting $E(Y|\vX_1)=m_1(\vX_1)$ the hypothesis we will consider can be written as
\bqan \label{rel.H0} 
H_0: m(\vx_1,x_2) = m_1(\vx_1).
\eqan
\indent To fully appreciate the nature of this hypothesis, let $F_{\vX_1}, F_{X_2}$ denote the marginal distribution functions of $\vX_1, X_2$, respectively, and consider the ANOVA-type decomposition 
\bqan \label{m.decomposition}
m(\vX_1,X_2) = \mu + \tilde{m}_1(\vX_1) + \tilde{m}_2(X_2) + \tilde{m}_{12}(\vX_{1},X_{2}),
\eqan
where
$
  \mu = \int\int m(\vx_{1},x_{2})dF_{\vX_1}(\vx_1)dF_{X_2}(x_{2})$,
 $ \tilde{m}_1(\vx_{1}) = \int m(\vx_{1},x_{2}) dF_{X_2}(x_2) - \mu,$
$  \tilde{m}_2(x_{2}) = \int m(\vx_{1},x_{2}) dF_{\vX_1}(\vx_1) - \mu$,  
$  \tilde{m}_{12}(\vx_{1},x_{2}) = m(\vx_{1},x_{2}) - \mu - \tilde{m}_1(\vx_{1}) - \tilde{m}_2(x_{2})$. Note that their definition implies
$
  \int \tilde{m}_1(\vx_{1})dF_{\vX_1}(\vx_1)=  \int \tilde{m}_2(x_{2})dF_{X_2}(x_2) =  \int \tilde{m}_{12}(\vx_{1},x_{2})dF_{\vX_1}(\vx_1) =
  \int \tilde{m}_{12}(\vx_{1},x_{2})dF_{X_2}(x_2) = 0$.  

\indent Under the null hypothesis (\ref{rel.H0}) it further follows that  
 \bqa
m_1(\vx_1)=\mu+\tilde m_1(\vx_1), \ \ \tilde m_2(X_2)=\tilde m_{12}(\vX_1,X_2)=0.
 \eqa
In the case that $\vX_1,X_2$ are independent, we also have 
$E(Y|X_2)=\mu$ under the null.


\indent  Let now $m_1(\vX_{1i})=E(Y|\vX_{1i})$, as before, and define the null hypothesis residuals as
\bqan\label{rel.H0.res}
\xi_i=Y_i-m_1(\vX_{1i}).
\eqan
Since under the null hypothesis
(\ref{rel.H0}) $m_1(\vX_{1i})=m(\vX_i)$, it follows that the null hypothesis residuals in (\ref{rel.H0.res}) equal the residuals defined in (\ref{rel.true.res}) and thus
\bqan\label{rel.H0.on.res}
E\left(\xi_i|X_{2i}\right)=0.
\eqan
The idea for constructing the test statistic is to think of the $\xi_i$ as data from a high-dimensional one-way ANOVA design with levels $x_{2i}, i=1,\ldots,n$. Because of (\ref{rel.H0.on.res}), it follows that under the null hypothesis (\ref{rel.H0}) there are no factor effects, and we can use the high-dimensional one-way ANOVA statistic of Akritas and Papadatos (2004) for testing (\ref{rel.H0}), after dealing with two important details. First, $m_1$ is not known and needs to be estimated. Second, the statistic of Akritas and Papadatos (2004) requires two or more observations per cell, but in regression designs we typically have only one response per covariate value. 
     
 To deal with the unknow $m_1$ will use the Nadaraya-Watson kernel estimator, 
 \bqan\label{rel.NW}
 \hat{m}_1(\vX_{1i}) = \sum_{j=1}^{n}\left( \frac{K_{H_n}\left(\textbf{X}_{1i}-\textbf{X}_{1j}\right)}{\sum_{l=1}^{n}K_{H_n}\left(\textbf{X}_{1i}-\textbf{X}_{1l}\right)}\right)Y_j,\ i = 1,\ldots,n,
  \eqan
with $K_{H_n}(\vx) = |H_n|^{-1}K(H_n^{-1}\textbf{x})$, where $K(\cdot)$ is a bounded $(d-1)$-variate kernel function of bounded variation and with bounded 
support, and $H_n$ is a symmetric positive definite $(d-1)\times (d-1)$ matrix called the bandwidth matrix. Set
\bqa
\hat\xi_i = Y_i-\hat m_1(\vX_{1i})
\eqa
for the estimated null hypothesis residuals.

To deal with the requirement of more than one observation per cell we make use of smoothness conditions and augment each cell  by including additional $p-1$ $\hat\xi_\ell$'s which correspond to the $(p-1)/2$ $X_{2\ell}$ values that are nearest to $X_{2i}$ on either side. To be specific, we consider the $(\hat\xi_i,X_{2i})$, $i=1,\ldots,n$, arranged so that $X_{2i_1}<X_{2i_2}$ whenever 
$i_1<i_2$, and for each $X_{2i}$, $(p-1)/2<i\le n-(p-1)/2$, define the nearest neighbor window $W_i$ as  
 \bqan\label{def.Wi}
W_i = \left\{j:|\hat{F}_{X_2}(X_{2j}) - \hat{F}_{X_2}(X_{2i})| \leq 
\frac{p - 1}{2n}\right\},
 \eqan
where $\hat{F}_{X_2}$ is the empirical distribution function of $X_2$. $W_i$ defines the augmented cell corresponding to $X_{2i}$. Note that the augmented cells are defined as sets of indices rather than as sets of $\hat \xi_i$ values. 
The vector of $(n-p+1)p$ constructed "observations" in the augmented one-way ANOVA design is  
 \bqan\label{hat.vxi}
 \hat \vxi_V = (\hat{\xi}_j, j \in W_{(p-1)/2+1}, \ldots , \hat{\xi}_j, j \in W_{n-(p-1)/2})' .
 \eqan
\indent Let $\mbox{MST}=\mbox{MST}(\hat\vxi_V)$, $\mbox{MSE}=\mbox{MSE}(\hat\vxi_V)$
denote the balanced one-way ANOVA mean squares due to treatment and error, respectively, computed on the data $\hat\vxi_V$. The proposed test statistic is based on
 \bqan\label{rel.proposedTS}
MST-MSE.
 \eqan
 
\subsection{Asymptotic results}
 
 \subsubsection{Asymptotic null distribution}
 
 \begin{thm} \label{thm:main_X1} 
Assume that the marginal densities $f_{\vX_1}$, 
$f_{X_2}$ of $\vX_1$, $X_2$, respectively, are bounded away from zero, the second derivatives of $f_{\vX_1}$ and $m_1(\vx)$ 
are uniformly continuous and bounded, that 
$\sigma^2(.,x_{2}) := E(\xi^2|X_2=x_2)$ is Lipschitz continuous,  $\sup_{\vx}\sigma^2(\vx)<\infty$, and
$E(\epsilon^4_i)<\infty$. 
Assume that the eigenvalues, $\lambda_i,\  i=1,\ldots,d-1$, of the bandwidth matrix $H_n$
defined in (\ref{rel.NW}), converge to zero at the same rate and satisfy
\bqan\label{cond.d_le_4}
n\lambda_i^8\to 0\ \mbox{ and }\ \frac{n\lambda_i^{2(d-1)}}{(\log n)^2}\to \infty,\  i=1,\ldots,d-1. 
\eqan
Then, under $H_0$ in (\ref{rel.H0}), the asymptotic distribution of the test statistic
in (\ref{rel.proposedTS}) is given by
$$n^{1/2} (MST-MSE) \dkonv N(0,\frac{2p(2p-1)}{3(p-1)}\tau^2),$$
where $\tau = \int\left[\int\sigma^2(\textbf{x}_{1},x_2)f_{\textbf{X}_{1}|X_2=x_2}(\textbf{x}_{1})d\textbf{x}_{1}\right]^2f_{X_2}(x_2)dx_2$.
\end{thm}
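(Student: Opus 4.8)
The plan is to proceed in three stages: an exact algebraic reduction of $MST-MSE$ to a bilinear form in the residuals, the replacement of the estimated residuals $\hat\xi_i$ by the true residuals $\xi_i$, and finally a central limit theorem for the resulting quadratic form. First I would record the elementary identity for a balanced one-way layout with $c=n-p+1$ cells of common size $p$. Writing $MST=SST/(c-1)$, $MSE=SSE/(c(p-1))$ and expanding each cell mean $\bar\xi_{i\cdot}^2$, the diagonal terms $\sum_k\hat\xi_{ik}^2$ cancel between $MST$ and $MSE$, while the grand-mean term is $O_p(n^{-1})$, leaving
\[ MST-MSE = \frac{1}{c(p-1)}\sum_{i}\sum_{k\neq l\in W_i}\hat\xi_k\hat\xi_l + o_p(n^{-1/2}). \]
Re-indexing by the rank $r_k$ of $X_{2k}$ and counting, for each pair $(k,l)$, how many windows $W_i$ contain both (namely $p-|r_k-r_l|$ whenever the rank gap is at most $p-1$, and zero otherwise), this becomes the weighted pairwise sum $\frac{2}{c(p-1)}\sum_{k<l}(p-|r_k-r_l|)\,\hat\xi_k\hat\xi_l$.

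The main obstacle is the second stage: showing that $\hat\xi_k=\xi_k-\delta_k$, with $\delta_k=\hat m_1(\vX_{1k})-m_1(\vX_{1k})$, may be replaced by $\xi_k$ without altering the $n^{1/2}$-limit. Substituting $\hat\xi_k\hat\xi_l=\xi_k\xi_l-\xi_k\delta_l-\delta_k\xi_l+\delta_k\delta_l$ produces three remainder sums. The term $\delta_k\delta_l$ is controlled crudely by the uniform rate $\sup_{\vx}|\hat m_1(\vx)-m_1(\vx)|=O_p(\lambda^2+\{\log n/(n\lambda^{d-1})\}^{1/2})$; after multiplication by $n^{1/2}$ the two pieces are $o_p(1)$ precisely when $n\lambda^8\to0$ and $n\lambda^{2(d-1)}/(\log n)^2\to\infty$. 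This is where condition (\ref{cond.d_le_4}) enters, and where the restriction $d\le4$ originates, since the two requirements are jointly feasible only when $2(d-1)<8$. The cross terms $\xi_k\delta_l$ are the genuinely delicate ones: expanding $\delta_l$ through the Nadaraya--Watson weights as a deterministic smoothing bias plus a stochastic part $\sum_j W_{lj}\xi_j$ turns them into multi-index sums, which I would bound by a Hoeffding-type projection that exploits $E(\xi\mid\vX_1)=0$ so that the leading projection has conditional mean zero and its variance contributes only at order $n^{-1/2}$ times a factor that is again negligible under (\ref{cond.d_le_4}).

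For the final stage I would condition on the covariates $\{\vX_i\}$, under which, by $H_0$, the $\xi_i=\sigma(\vX_i)\epsilon_i$ are independent with conditional mean zero. The statistic is then a generalized quadratic form $\sum_{k<l}a_{kl}\epsilon_k\epsilon_l$ with vanishing diagonal and a banded weight matrix ($a_{kl}=0$ once the rank gap exceeds $p-1$), so that no single coordinate dominates and $E\epsilon^4<\infty$ delivers the required negligibility; asymptotic normality then follows from a martingale central limit theorem for such forms (the de~Jong (1987) criterion), with a characteristic-function conditioning argument lifting the conditional statement to the unconditional one. The conditional variance is
\[ \Var = \frac{4}{c^2(p-1)^2}\sum_{k<l}(p-|r_k-r_l|)^2\,\sigma^2(\vX_k)\sigma^2(\vX_l), \]
the cross-pair covariances vanishing because the $\epsilon_i$ are independent and mean zero. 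Since neighbouring $X_2$ values coincide in the limit, $E[\sigma^2(\vX_k)\sigma^2(\vX_l)\mid |r_k-r_l|=m]$ converges to $[\int\sigma^2(\vx_1,x_2)f_{\vX_1|X_2=x_2}(\vx_1)d\vx_1]^2$ averaged against $f_{X_2}$; summing the weights through $\sum_{m=1}^{p-1}(p-m)^2=(p-1)p(2p-1)/6$ then reproduces the constant $\frac{2p(2p-1)}{3(p-1)}$ multiplying the functional $\tau$ of the statement.
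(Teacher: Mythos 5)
Your proposal is correct and follows essentially the same route as the paper's proof: the identical decomposition $\hat\xi_k=\xi_k-\delta_k$ splitting $\sqrt{n}(MST-MSE)$ into a main quadratic form plus cross and $\delta_k\delta_l$ remainder terms, the same reliance on the uniform kernel convergence rate with conditions (\ref{cond.d_le_4}) entering exactly where you place them (and yielding the $d\le 4$ restriction for the same reason), and the same window-intersection counting $\sum_{m=1}^{p-1}(p-m)^2$ producing the constant $\frac{2p(2p-1)}{3(p-1)}$ in the variance. The only differences are presentational: the paper phrases the reduction through the ANOVA matrix $A$ and its block-diagonal approximation, conditions on the $X_2$'s alone rather than on all covariates, and cites Wang, Akritas and Van Keilegom (2008) for the asymptotic normality of the leading term where you invoke de Jong's quadratic-form CLT.
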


\indent An estimate of $\tau^2$ can be obtained by modifying Rice's (1984) estimator as follows
\begin{equation}
\hat{\tau}^2 = \frac{1}{4(n-3)}\sum_{j=2}^{n-2}(\hat{\xi}_j-\hat{\xi}_{j-1})^2(\hat{\xi}_{j+2}-\hat{\xi}_{j+1})^2.
\end{equation}

\indent The next subsection gives the asymptotic theory under local additive and under general local alternatives. As these limiting results show, the asymptotic mean 
of the test statistic $MST - MSE$ is positive under alternatives. Thus, the test procedure rejects the null hypothesis for "large" values of the test statistic. 

\vspace{-2mm}
 
 \subsubsection{Asymptotics under local 
 alternatives}
 
 \vspace{-2mm}
 
\indent \indent The local additive alternatives and the general local alternatives are of the form
\begin{eqnarray} \label{additive_alternative}
H_1^A: m(\vx_1,x_2) &=& m_1(\vx_1) + \rho_n  \tilde m_2(x_2), \\
H^G_1: m(\vx_1,x_2) &=& m_1(\vx_1) + \rho_{1n}\tilde m_2(x_2) + \rho_{2n} \tilde m_{12}(\vx_1,x_2), \label{general_alternative}
\end{eqnarray}
where the functions $\tilde m_2,\ \tilde m_{12}$ satisfy $E\left(\tilde m_2(X_2)\right)=0=E\left(\tilde m_{12}(\vx_1,X_2)\right)$ and 
$\rho_n=\rho_{1n}=an^{-1/4}$, $\rho_{2n}=bn^{-1/4}$, for constants $a$, $b$.

\begin{thm} \label{thm:additive_alternative} 
Consider the notation and assumptions of Theorem \ref{thm:main_X1}. Moreover, assume that 
$\tilde m_2(x)$ is Lipschitz continuous. 

\vspace{-4mm}
 \begin{enumerate}
 \item ({\bf Local Additive Alternatives})
Then, under $H_1^A$ in (\ref{additive_alternative}),  as $n \rightarrow \infty$, 
$$n^{1/2}(MST - MSE) \dkonv N\left(a^2 p \Var(\tilde m_2(X_2)),\frac{2p(2p-1)}{3(p-1)}\tau^2\right).$$
 \item ({\bf Local General Alternatives}) Assume further that $\tilde m_{12}(\vx_1,x_2)$ is Lipschitz continuous on $x_2$ uniformly on $\vx_1$.  Then, under $H_1^G$ in (\ref{general_alternative}), as $n \rightarrow \infty$, 
$$n^{1/2}(MST - MSE) \dkonv N\left(\mu^G,\frac{2p(2p-1)}{3(p-1)}\tau^2\right), \ \mbox{where}$$
$
\mu^G= pa^2Var(\tilde m_2(X_2))\hspace{-1mm} +\hspace{-1mm} pb^2Var(\tilde m_{12}(\vX_1,X_2))\hspace{-1mm} +\hspace{-1mm} 2pabCov(\tilde m_2(X_2),\tilde m_{12}(\vX_1,X_2))$.
If $a = b$ the formula simplifies to
$
\mu^G=pa^2Var(\tilde m_2(X_2)\hspace{-1mm} +\hspace{-1mm} \tilde m_{12}(\vX_1,X_2)).
$
\end{enumerate}
\end{thm}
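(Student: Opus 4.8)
The plan is to reduce Theorem \ref{thm:additive_alternative} to the null-hypothesis analysis carried out in the proof of Theorem \ref{thm:main_X1}, by isolating the deterministic signal that the local alternative injects into the residuals. Under $H_1^A$ we have $Y_i = m_1(\vX_{1i}) + \rho_n\tilde m_2(X_{2i}) + \sigma(\vX_i)\epsilon_i$, so substituting the Nadaraya--Watson estimator $\hat m_1$ and writing $w_{ij}$ for its weights gives
$$\hat\xi_i = \Big[\sigma(\vX_i)\epsilon_i - \sum_j w_{ij}\sigma(\vX_j)\epsilon_j\Big] + \Big[m_1(\vX_{1i}) - \sum_j w_{ij}m_1(\vX_{1j})\Big] + \rho_n\Big[\tilde m_2(X_{2i}) - \sum_j w_{ij}\tilde m_2(X_{2j})\Big].$$
The first two brackets are precisely the stochastic and bias terms already controlled under $H_0$; the condition $n\lambda_i^8\to 0$ makes the bias $o(n^{-1/4})$, hence negligible at the $n^{1/2}$ scale. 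In the third bracket, $\sum_j w_{ij}\tilde m_2(X_{2j})$ is a Nadaraya--Watson smooth over $\vX_1$ that converges to $E(\tilde m_2(X_2)\mid \vX_1 = \vX_{1i})$, which vanishes when $\vX_1$ and $X_2$ are independent and is otherwise absorbed into a lower-order term through the centering $E(\tilde m_2(X_2)) = 0$. Thus, for the purpose of the limit, $\hat\xi_i = e_i + s_i$ where $e_i$ denotes the null-type terms and $s_i = \rho_n\tilde m_2(X_{2i})$ is the signal.

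Since $MST-MSE$ is a quadratic form in $\hat\vxi_V$, substituting $\hat\xi_i = e_i + s_i$ expands it into a pure-noise part, a pure-signal part, and a cross term. The pure-noise part is exactly the statistic treated in Theorem \ref{thm:main_X1} and supplies the limiting variance $\frac{2p(2p-1)}{3(p-1)}\tau^2$. Because the cells $W_i$ are nearest-neighbor windows in $X_2$ of shrinking width and $\tilde m_2$ is Lipschitz, $s_i$ is essentially constant within each cell, so the signal contributes to the treatment sum of squares but not to the error sum of squares. A direct computation of the between-cell variance of $\{s_i\}$ then gives $E(MST-MSE) \approx p\,\Var(s_i) = p\rho_n^2\Var(\tilde m_2(X_2))$, and multiplying by $n^{1/2}$ with $\rho_n = an^{-1/4}$ produces exactly the claimed mean $a^2 p\,\Var(\tilde m_2(X_2))$.

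It remains to check that neither the cross term nor the fluctuation of the pure-signal part affects the limit. The cross term is a sum of products $s_i e_j$ with $s_i$ deterministic (given the design) and $e_j$ conditionally mean-zero; its mean is zero and, because $\rho_n^2 = a^2 n^{-1/2}$, its standard deviation is of order $o(n^{-1/2})$, so it vanishes after the $n^{1/2}$ scaling. The same rate shows the pure-signal quadratic form concentrates at its mean. Hence, by Slutsky's theorem, $n^{1/2}(MST-MSE)$ has the limiting normal law of Theorem \ref{thm:main_X1} shifted by $a^2 p\,\Var(\tilde m_2(X_2))$. The general alternative $H_1^G$ is identical with $s_i = \rho_{1n}\tilde m_2(X_{2i}) + \rho_{2n}\tilde m_{12}(\vX_{1i},X_{2i})$; the added Lipschitz hypothesis on $\tilde m_{12}$ in $x_2$ again makes $s_i$ locally constant within cells, and expanding $\Var(s_i)$ yields the three variance--covariance terms of $\mu^G$, with the $a=b$ case collapsing to $\Var(\tilde m_2 + \tilde m_{12})$.

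The step I expect to be the main obstacle is the rigorous handling of the new term $\rho_n[\tilde m_2(X_{2i}) - \sum_j w_{ij}\tilde m_2(X_{2j})]$ together with its interaction with the windowing: one must show simultaneously that the $\vX_1$-smoothing of the signal does not leak into the between-cell variation in the $X_2$ direction, and that the cross terms between this signal and the smoothed noise are uniformly negligible. This requires combining the uniform smoothing-error bounds established for $\hat m_1$ in Theorem \ref{thm:main_X1} with the Lipschitz continuity of $\tilde m_2$ (and $\tilde m_{12}$) to replace within-cell averages of the signal by pointwise values, and then reusing the moment computations of the Akritas--Papadatos (2004) framework to confirm that only the mean, and not the variance, is perturbed.
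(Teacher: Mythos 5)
Your treatment of the local additive alternatives (Part 1) is, in substance, the paper's own proof: you use the same decomposition of the estimated residuals into null-type terms plus the signal $\rho_n\tilde m_2(X_{2i})$, expand the quadratic form into a pure-noise part (whose limit is supplied by Theorem \ref{thm:main_X1}), a signal--noise cross term, and a pure-signal term, and you identify the mean shift $a^2p\Var(\tilde m_2(X_2))$ with the limit of $\sqrt{n}\rho_n^2\tilde\vm_{2V}'A\tilde\vm_{2V}$; this is exactly the expansion (\ref{additive.terms}) together with Lemmas \ref{lem.pf.Add.Second} and \ref{lem.pf.Add.Third}. Two small slips there: the cross term is not exactly mean zero, since your $e_j$ contains the deterministic kernel bias, and the claim that the $\vX_1$-smoothed signal is ``absorbed'' by the centering $E(\tilde m_2(X_2))=0$ is unjustified when $\vX_1$ and $X_2$ are dependent; both are $O(n^{-1/4})$ or smaller objects whose products must be, and can be, bounded explicitly, as the paper does by folding them into $\vDelta_{m_1V}$ and into Lemma \ref{lem.pf.Add.Second}.

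The genuine gap is in Part 2. You obtain $\mu^G$ by asserting that the Lipschitz continuity of $\tilde m_{12}(\vx_1,x_2)$ in $x_2$ ``again makes $s_i$ locally constant within cells,'' so that the signal enters MST but not MSE and the mean can be read off as $p$ times the variance of the combined signal. This is false for the interaction term. The windows $W_i$ in (\ref{def.Wi}) are nearest-neighbor neighborhoods in the $X_2$ coordinate only: for $j\in W_i$ the $X_{2j}$ are close to $X_{2i}$, but the $\vX_{1j}$ are in no sense close to one another --- they behave like independent draws from the conditional law of $\vX_1$ given $X_2\approx X_{2i}$. Consequently $\tilde m_{12}(\vX_{1j},X_{2j})$, $j\in W_i$, fluctuates within each cell no matter how smooth $\tilde m_{12}$ is in $x_2$; the Lipschitz hypothesis only licenses replacing $X_{2j}$ by $X_{2i}$, i.e., reducing to $\tilde m_{12}(\vX_{1j},X_{2i})$, with all of the $\vX_1$-variation intact. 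In particular the interaction signal inflates MSE as well as MST, so the ``between-cell variance'' shortcut that worked for $\tilde m_2$ cannot be invoked, and the three variance/covariance terms of $\mu^G$ do not follow from expanding $\Var(s_i)$. This is precisely where the paper's proof does its real work: Lemmas \ref{pf.General.Second} and \ref{pf.General.Third} introduce the within-cell averages $\bar m_{12}^i(\cdot,X_{2i})=p^{-1}\sum_j\tilde m_{12}(\vX_{1j},X_{2i})I(j\in W_i)$ and evaluate the limits of $\sqrt{n}\rho_{2n}^2\tilde\vm_{12V}'A\tilde\vm_{12V}$ and of the cross term $\sqrt{n}2\rho_{2n}(\vxi_V-\vDelta_{m_1V}-\rho_{1n}\tilde\vm_{2V})'A\tilde\vm_{12V}$ by explicit moment calculations on these averages, the covariance $2pab\,\mathrm{Cov}(\tilde m_2(X_2),\tilde m_{12}(\vX_1,X_2))$ emerging from the $\tilde m_2\times\tilde m_{12}$ component of that cross term (your regrouping of the two signals into a single $s_i$ is algebraically equivalent, but the limit must still be computed through the $\bar m_{12}^i$, not read off a constancy claim). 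To repair your argument you would need to carry out this within-cell averaging analysis; without it, Part 2 of your proposal is an assertion rather than a proof.
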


\vspace{-6mm}

\subsection{Practical considerations}

 \subsubsection{Using other estimators of $m(\vx_1)$}\label{other.est}

\indent \indent We conjecture that the asymptotic theory of the test statistic remains the same for a wide class of other nonparametric estimators of $m_1$, such as local polynomial estimators or, under an additive model, the backfitting estimator. Moreover, if one is willing to assume additional smoothness conditions then, since use local polynomial estimators yields faster rates of convergence (Stone, 1982), we conjecture that the asymptotic theory of the test statistic based on such estimators can include covariate dimensionality greater than the present 4. A similar comment applies if one is willing to assume an additive model.

An alternative version of the present kernel estimator incorporated in our simulations is a version of the estimator proposed by Newey (1994) and 
Linton and Nielsen (1995), and further studied in Mammen, Linton and Nielsen (1999)  and Horowitz and Mammen (2004), computed as 
 \bqa
 \widehat{\tilde m}_1(\vx_1)=\frac1n \sum_{i=1}^n\widehat m(\vx_1,X_{2i}),
 \eqa
with $\widehat m(\vx_1,X_{2i})$ a Nadaraya-Watson kernel estimator of $m(\vx_1,x_2)$. 
Under the null hypothesis this also estimates $E(Y|\vx_1)$, but under the alternative it estimates (see decomposition (\ref{m.decomposition}))
\bqa
\tilde m_1(\vx_1) 
=\mu +\tilde m_1(\vx_1).
\eqa

\renewcommand{\baselinestretch}{1}
\begin{table}[ht]
\caption{Rejection rates with alternative fitting methods}  \label{table:Different_estimators}
\centering     
\begin{tabular}{l c c c c c}  
\hline\hline  
& \multicolumn{1}{c}{$\theta = 0$} & \multicolumn{4}{c}{$\theta = 1$} \\
& \multicolumn{1}{c}{$\gamma$} & \multicolumn{4}{c}{$\gamma$} \\
\cmidrule(r){2-2}  \cmidrule(r){3-6}
Method  & 0 & 1 & 2 & 3 & 4  \\ [0.5ex]  
\hline
 ANOVA-type   (p=11) & .053 & .119 & .196 & .292 & .390 \\
 ANOVA-type   (p=9)  & .052 & .121 & .191 & .296 & .388 \\
 ANOVA-type 2 (p=11) & .054 & .150 & .218 & .315 & .440 \\
 ANOVA-type 2 (p=9)  & .054 & .122 & .201 & .320 & .422  \\
\hline   
\hline    
\end{tabular} 
\end{table}
\renewcommand{\baselinestretch}{1.7}

\noindent In contrast, under the alternative, $\widehat m_1(\vx_1)$ estimates
\bqa
m_1(\vx_1)=\mu+ \tilde m_1(\vx_1) + E(\tilde m_2(X_2) + \tilde m_{12}(\vx_1,X_2)|\vX_1=\vx_1).
 \eqa
Thus, forming the residuals by $\hat\xi=Y-\widehat m_1(\vx_1)$  inadvertently removes some of the effect of $X_2$. The simulations reported in Table \ref{table:Different_estimators} suggest that  the test statistic using the residuals $\hat\xi=Y-\widehat{\tilde m}_1(\vx_1)$ (ANOVA-type2 in the table) can have improved power against non-additive alternatives. In this table, the data are generated according to the model 
$Y = X_1 + \theta X_2 + \gamma X_1 X_2 + \epsilon$, with $X_1, X_2$ independent $U(0,1)$ random variables and $\epsilon \sim N(0,3^2)$. The reported rejection rates are based on 2000 simulation runs with $n=100$. We conjecture that a similar alternative to the local polynomial estimator of $m_1$ will have improved power against non-additive alternatives.

\vspace{-3mm}

 \subsubsection{Using dimension reducing techniques}\label{subsec.dr}
 
\indent \indent The conditions of Theorem \ref{thm:main_X1}  restrict $d$ to be less than or equal to $4$.
However, under the assumption of a sparse model, the effects of the curse of dimensionality can be moderated through the use of dimension reduction techniques. In all simulations reported in Section \ref{sec.sim_var.sel} as well as the data analysis results of Section \ref{sec.real_data} we used the classical sliced inverse regression (SIR) dimension reduction method of Li (1991). Moreover, we employed  a variable screening method prior to applying SIR. The variable screening consists of performing the marginal test of Wang, Akritas and Van Keilegom (2008) for the significance of each variable, and keeping those variables for which the p-value is less than 0.5. 

\section{Simulations: Model Checking Procedures}\label{sec:simulations:model_checking}

\vspace{-3mm}

\subsection{Brief literature review}
\indent \indent Let $\vX=(\vX_1,\vX_2)$ be the vector of $d$ available predictors, with $\vX_1$ being $d_1$-dimensional. The problem of assessing the usefulness of $\vX_2$, i.e. testing $H_0: m(\vx_1,\vx_2)=m_1(\vx_1)$, has been approached from different angles by many authors. The literature is extensive, so only a brief summary of some of the proposed ideas  and the resulting test procedures is given below. For additional references see Hart (1997) and Racine, Hart and Li (2006). 

One class of procedures is based on the idea that the null hypothesis residuals,
$\xi = Y - m_1(\vX_{1})$, satisfy $E(\xi |\vX)=0$ under $H_0$ and $E(\xi |\vX)=
m(\vX)-m_1(\vX_{1})$ under the alternative. Thus, $E(\xi E(\xi |\vX)|\vX)=(m(\vX)-m_1(\vX_{1}))^2$ under the alternative and zero under the null. Using this idea, Fan and Li (1996) propose a test statistic based on estimating
$E[\xi f_1(\vX_1)E(\xi f_1(\vX_1)|\vX)f(\vX)]$ which equals $E[(m(\vX)-m_1(\vX_{1}))^2f_1(\vX)^2f(\vX)]$ under the alternative and zero under the null. Their test statistic is
\begin{equation}
\frac{1}{n}\sum_{i}[\tilde{\xi}_i\tilde{f}_1(\vX_{1i})]\left[\frac{1}{(n-1)h_n^d}\sum_{j\neq i}[\tilde{\xi}_j\tilde{f}_1(\vX_{1j}]K\left(\frac{\textbf{X}_{i} - \textbf{X}_{j}}{h_n}\right)\right] \nonumber 
\end{equation}
where $\tilde{f}_1$ is the estimated density of $\vX_1$, $\tilde{\xi}_i$ is the estimated residuals under the null hypothesis, and $K$ is a kernel function. 
Fan and Li (1996) show that their test statistic is asymptotically normal under $H_0$. 
 Lavergne and Voung (2000) propose a test statistic based on different estimator of the same quantity as Fan and Li (1996), 
 which is 
  \bqa
\frac{(n-4)!}{n!}\sum_{a}(Y_i - Y_k)(Y_j - Y_l)L_n
\left(\frac{\textbf{X}_{1i} - \textbf{X}_{1k}}{g_n}\right)
L_{n}\left(\frac{\textbf{X}_{1j} - \textbf{X}_{1l}}{g_n}\right)
K_{n}\left(\frac{\textbf{X}_{i} - \textbf{X}_{j}}{h_n}\right),
\eqa
where $\sum_{a}$ is the sum over all permutations of 4 distinct elements chosen from $n$, $L_n=g_n^{-d_1}L$ for a kernel $L$ on $\mathbf{R}^{d_1}$ and $K_n=h_n^{-d}K$ for a kernel $K$ on $\mathbf{R}^d$. Lavergne and Voung (2000) show that their test statistic is also asymptotically normal under $H_0$.

A related class of procedures is based on direct estimation of $E[(m(\vX)-m_1(\vX_1))^2W(\vX)]$, for some weight function $W$. See, for example, A\"{i}t-Sahalia, Bickel,  and Stoker (2001). The use of such test statistics is complicated by the need to correct for their bias. See also the bootstrap-based procedure of Delgado and Manteiga (2001). Because of the computer intensive nature of bootstrap-based procedures, these are not included in our comparisons.

An additional class of test procedures uses alternatives based on Stone's (1985) additive model. We will consider the procedure proposed by
Fan and Jiang (2005). This is based on Fan, Zhang, and Zhang's (2001) Generalized Likelihood Ratio Test (GLR), using a local polynomial approximation and the backfitting algorithm for estimating the additive components.  


\vspace{-3mm}

\subsection{Numerical comparison}
 \label{sec:simulations:variable_selection}

\indent \indent In this section we 
compare the proposed ANOVA-type and ANOVA-type2 statistics described in Section \ref{other.est} to the statistics proposed by Lavergne and Vuong (2000) (LV in the tables), Fan and Li (1996) (FL in the tables), and Fan and Jiang (2005) (GLR in the tables). 

The data is generated according to the models (also used in Lavergne and Vuong, 2000)
\begin{equation}
Y = -X_1 + X_1^3 + f_j(X_2) + \epsilon, \mbox{    } j=0,1,2,3,,4,5,6,
\end{equation}
where $X_1, X_2$ are iid N$(0,1)$ and $\epsilon \sim N(0,4)$. Here, $f_0(x)=0$, which corresponds to the null hypothesis $H_0: m(x_1,x_2) = m(x_1)$; $f_1(X_2) = .5 X_2, f_2(X_2) = X_2$ and $f_3(X_2) = 2 X_2$ give three linear alternatives,and $f_4(X_2) = sin(2 \pi X_2), f_5(X_2) = sin(\pi X_2)$, and $f_6(X_2) = sin(2/3 \pi X_2)$ give three non-linear alternatives. The kernel for the
Nadaraya-Watson estimation of $m(X_1)$ is the uniform on $(-0.5,0.5)$ density, and the bandwidth is selected through leave-one-out cross validation. 
The rejection rates shown in Table \ref{table:stat1_LV_FL} for LV, FL, and F tests are taken from the simulation results reported in the LV paper (based on 2000 runs). It is important to note that, in each simulation setting, the LV paper reports several rejection rates for the LV and FL tests, each corresponding to different values of smoothing parameters. Since the best performing constants are different for different simulation settings, the rejection rates reported in Table \ref{table:stat1_LV_FL} represent a) the most accurate alpha level achieved over all constants, and b) the best power achieved overall constants for each alternative. For comparison purposes, the rejection rates for the ANOVA-type tests and the GLR test are also based on 2000 simulation runs. 
\renewcommand{\baselinestretch}{1}
\begin{table}[ht] 
\caption{Rejection rates under $H_0$, linear and non-linear alternatives}  \label{table:stat1_LV_FL}
\centering     
\begin{tabular}{c c c c c c c c c}  
\hline\hline                       
& & \multicolumn{4}{c}{linear} & \multicolumn{3}{c}{sine} \\
\cmidrule(r){3-6}  \cmidrule(r){7-9}
n & test & $f_0$ & $f_1$ & $f_2$ & $f_3$ & $f_4$ & $f_5$ & $f_6$ \\ [0.5ex]  
\hline
100 & LV   & .041 & .098 & .482 & .991 & .182 & .266 & .319   \\ 
    & FL   & .021 & .051 & .271 & .970 & .126 & .168  & .187  \\
    & ANOVA-type (p = 9) & .052 & .218 & .79 & .999 & .423 & .523 & .535 \\      
    & ANOVA-type (p = 7) & .056 & .244 & .780 & .999 & .432 & .527 & .551 \\      
    & ANOVA-type2  (p = 9)   & .065 & .275 & .831 &  1 & .453 & .598 & .600 \\
    & GLRT               & .044 & .365 & .951 & 1    & .123 & .497 & .645 \\
    & F-test & .051 & .695 & .997 & 1 & .046 & .055 & .222 \\                   
\hline
200 & LV   & .054 & .208 & .875 & 1 & .386 & .540 & .678 \\
    & FL   & .025 & .083 & .695 & 1  & .289 & .395 & .471 \\
    & ANOVA-type (p = 9) & .055 & .374 & .95 & .999 & .73 & .778 & .788 \\    
    & ANOVA-type (p = 7) & .051 & .376 & .979 & 1 & .746 & .820 & .820 \\        
    & ANOVA-type2  (p = 9)      & .069 & .487 & .999  & 1 & .821 & .882 & .884 \\ 
    & GLRT               & .036 & .656 & 1    & 1 & .188 & .877 & .936 \\ 
    & F-test & .052 & .931 & 1 & 1 & .051 & .053 & .340 \\                     
\hline    
\hline    
\end{tabular} 
\end{table} 
\renewcommand{\baselinestretch}{1.7}

As expected, the F test achieved the best results for the three linear alternatives and the worse results for the three non-linear alternatives. 
The GLR test has higher power than the ANOVA-type tests against linear alternatives (which is partly explained by the fact it is based on normal likelihood),  but is much less powerful against the first of the non-linear alternatives. As the non-linearity decreases ($f_5$ and $f_6$) the power of the GLR test improves. 

The GLR test is designed for additive models, which is exactly the simulation setting of Table \ref{table:stat1_LV_FL}. Under non-additive alternatives, however, it can perform poorly as indicated by the simulations  reported in the first part of Table \ref{table:GLR_non_add}. These simulations use sample size $n=200$ with data generated from the model $Y = X_1^{X_2}(1 + \theta X_3) + \frac{X_2^{(1 + \theta X_3)}}{X_2} + \epsilon$, where $\epsilon \sim N(0,0.1)$, and $X_1, X_2, X_3$ are i.i.d. $U(0.5,2.5)$. The hypothesis tested is that $m(X_1,X_2,X_3)=m_1(X_1,X_2)$. The residuals for the ANOVA-type test in the first part of Table \ref{table:GLR_non_add} are based on a Nadaraya-Watson fit with kernel the uniform on 
$(-0.5,0.5)\times (-0.5,0.5)$ density and the common bandwidth selected through leave-one-out cross validation.

\renewcommand{\baselinestretch}{1}
\begin{table}[ht]  
\caption{Rejection rates for non-additive and heteroscedastic models}  \label{table:GLR_non_add}
\centering     
\begin{tabular}{l c c c c c c c c c c}  
\hline\hline 
&&&&&\\[-2mm]
& \multicolumn{5}{c}{non-additive alternatives} & \multicolumn{5}{c}{heterocedastic alternatives}\\ 
&  \multicolumn{5}{c}{$\theta$}  & \multicolumn{5}{c}{$\theta$} \\
\cmidrule(r){2-6}  \cmidrule(r){7-11}
test & 0  & 0.02 & .04 & .06 & .08 & 0  & 0.025 & .05 & .1 & .2 \\ [0.5ex]  
\hline
 ANOVA-type ($p=9$)    & .052 & .176 & .609 & .940 & .994   &  .053 & .067 & .124 & .485  & .998\\
 GLR                   & .048 & .082 & .110 & .189 & .304   & .465 & .511 & .624 & .908 & 1\\
\hline   
\hline    
\end{tabular} 
\end{table} 
\renewcommand{\baselinestretch}{1.7}


\noindent Finally, it should be mentioned that the GLR test does not maintain its level under heteroscedasticity. In simulations, reported in the second part of Table \ref{table:GLR_non_add},  under the additive but heteroscedastic model 
$Y = X_1^2 + \theta \cos(\pi X_2) + X_2\epsilon$, $X_1, X_2$ i.i.d. N$(0,1)$, $\epsilon \sim \mbox{N}(0,0.5)$, using sample size $n=200$, the GLR test is very liberal while the ANOVA-type test maintains an accurate level.

\vspace{-6mm}
\section{From Model Checking to Variable Selection} \label{section:Variable_Selection}
\vspace{-2mm}
\subsection{The proposed procedure} \label{section:Benjamini}

\indent \indent In this section we will assume a sparse regression model in the sense that there exits a subset of indices $I_0=\{j_1,\ldots,j_{d_0}\}\subset \{1,\ldots,d\}$ such that only the covariates $X_j$ with $j\in I_0$ influence the regression function. Moreover, we will assume the dimension reduction model of Li (1991), i.e.
$
m(\vx)=g(\vB\vx), \ \mbox{where  $\vB$ is a $K\times d$ matrix}.
$
In this context we will describe the following variable selection procedure using backward elimination based on the Benjamini and Yekuteli (2001) method for controlling the false discovery rate (FDR):
\vspace{-3mm}
\ben
\item Apply the variable screening procedure described in Section \ref{subsec.dr}. 
With a slight abuse of notation, the vector of the remaining covariates and its dimension will be denoted by $\vx$ and $d$.
\vspace{-3mm}
\item Use SIR to obtain the estimator $\widehat\vB$.
\vspace{-3mm}
\item\label{ret.step} Obtain p-values from testing each of the hypotheses:
\begin{equation} \label{H0_benjamini} 
H_{0}^j: m(\textbf{x}) = m_1(\textbf{x}_{(-j)}), \mbox{    } j = 1,\ldots, d,
\end{equation} 
where $\textbf{x}_{(-j)} = (x_1,...,x_{j-1},x_{j+1},...,x_d)$:
    \ben
    \vspace{-3mm}
    \item Compute the test statistic (see Theorem \ref{thm:main_X1}) 
\bqa
z_j = \sqrt{n}(MST_j - MSE_j)/\sqrt{\frac{2p(2p-1)}{3(p-1)}}\hat\tau_j^2
\eqa
using residuals formed by a kernel estimator on the variables $\widehat\vB_{(-j)}\vx_{(-j)}$, where 
$\widehat\vB_{(-j)}$ is the $K\times (d-1)$ matrix obtained by omitting the $j$th column of  $\widehat\vB$. 
   \item\label{ret.step2} Compute the p-value for $H_0^j$ as $\pi_j = 1-\Phi(z_j)$.
   \een
  \item Compute 
  \vspace{-3mm}
  \bqan\label{B-Y-k}
k = \max\left\{ j: \pi_{(j)} \leq \frac{i}{d}\frac{\alpha}{\sum_{l=1}^{d}l^{-1}}\right\}
\eqan
for a choice of level $\alpha$, where $\pi_{(1)},\ldots,\pi_{(d)}$ are the ordered p-values. If $k=d$ stop and retain all variables. If $k<d$ 
  \ben
  \item update $\vx$ by eliminating the covariate corresponding to $\pi_{(d)}$, 
  \item update $\widehat\vB$ by eliminating the column corresponding to the deleted variable, and
  \item proceed to the next step.
  \een 
  \vspace{-3mm}
 \item Repeat steps \ref{ret.step} and \ref{ret.step2}, with the updated $vx$ and $\widehat\vB$.
\een
\vspace{-3mm}
{\bf Remarks.} 1)  Another approach for constructing a variable selection procedure is to use a single application of the Benjamini
and Yekuteli (2001) method for controlling the false discovery rate (FDR). This is
similar to one of the two procedures proposed in Bunea et al. (2006). However, this did not perform well  in simulations and is not recommended. 
A backward elimination approach was Li, Cook and Nachtsheim (2005), but they did not use multiple testing ideas.\\
\indent\indent 2) Based on our simulation results, the variable screening part (Step 1) of the variable selection procedure does not improve the performance. However, it was included in the simulations as it reduces the computational time.

\vspace{-6mm}

\subsection{Simulations: Variable selection procedures}\label{sec.sim_var.sel}
\vspace{-3mm}
\indent \indent Because the ANOVA-type2 method (see Section \ref{other.est}) is computationally more intensive, we used only the proposed variable selection method using the ANOVA-type test described in Section \ref{test.stat}, with ANOVA cell sizes of 5 (when $n=40$), 7, and 9 (when $n=110$). The parameter $\alpha$ was set to $0.07$ in Table \ref{table:LASSO} (so the FDR is controlled at level $(25-5)0.07/25=0.056$), and $\alpha=0.06$ in Table \ref{table:LASSO2_sin} (so the FDR is controlled at levels  0.052 and 0.045). These procedures are compared 
with LASSO, SCAD, adaptive LASSO, the FDR-based variable selection method proposed by Bunea, Wegkamp and Auguste (2005) (BWA in the tables), and a version of the BWA procedure which uses backward elimination (BWA+BE in the tables). 
The comparison criterion is the mean number of correctly and incorrectly excluded variables. All comparisons are based on 2000 simulated data sets.

For LASSO we found that the R code in in http://cran.r-project.org/web/packages/\\ glmnet/index.html, with the lambda.lse option for selecting lambda, gave the best results; for adaptive LASSO we used the R code from
http://www4.stat.ncsu.edu/$\sim$boos/var.select\\
/lasso.adaptive.html; for SCAD we used the function scadglm of the package SIS in R.

In Table \ref{table:LASSO}, data sets of size $n=110$ were generated from the linear model $Y = \vbeta^T\vX + \epsilon,$ where  $\epsilon \sim N(0,3^2)$, the dimension of $\vX$ is $d=25$, and 
\bqa
\vbeta^T = (3,1.5,0,0,2,0,2,0,0,0,0,0,0,0,0,0,3,0,0,0,0,0,0,0,0).
\eqa 
The covariates are generated from a multivariate normal distribution with marginal means zero and covariances as shown in the table. It is seen that the proposed nonparametric variable selection procedures correctly exclude, on average, about 19.5 out of the 20 nonsignificant predictors. This is about as good as the procedures  
designed for linear models. The proposed procedures incorrectly exclude, on average, about 0.5 of the 5 significant predictors, which is more than the other procedures (with the exception of BWA). 

\renewcommand{\baselinestretch}{1}
\begin{table}[ht] 
\caption{Comparisons using a linear model: $d=25$, $n=110$}  \label{table:LASSO}
\centering     
\begin{tabular}{l c c c c}  
\hline\hline  
& \multicolumn{2}{c}{$\Sigma=I$} &  \multicolumn{2}{c}{$\Sigma=(0.5^{|i-j|})$}\\
\cmidrule(r){2-3}  \cmidrule(r){4-5}
test & correct & incorrect   & correct & incorrect   \\ [0.5ex]  
\hline
SCAD & 19.48  & .026 & 19.37   & .023\\ 
LASSO    & 18.29 & .005 & 18.28 & .004\\                  
Adaptive LASSO & 19.28 & .005 & 19.26 & .025\\  
BWA & 19.99 & 1.02 & 19.97 & 1.41\\ 
BWA+BE & 19.55 & .001 & 19.49 & .041\\ 
ANOVA-type(p=7) & 19.46 & .63  & 19.30 & .44\\
ANOVA-type(p=9) & 19.52 & .65  & 19.40 & .36\\
\hline    
\hline    
\end{tabular} 
\end{table}
\renewcommand{\baselinestretch}{1.7}

\indent In Table \ref{table:LASSO2_sin}, data sets of size $n=40$ were generated from the 
models $Y=g_\ell(\vX)+\epsilon$, $\ell=1,2$, where $\epsilon\sim N(0,0.3^2)$, the dimension of $\vX$ is $d=8$, and
\bqa
g_1(\vx)= \sin(\pi x_1)  ,\ \ \ g_2(\vx)=\sin(3/4 \pi x_1)-3\Phi(-|x_5|^3).
\eqa
\vspace{-3mm}
\renewcommand{\baselinestretch}{1}
\begin{table} [htb!]
\caption{Comparisons using nonlinear models: $d=8$, $n=40$}   \label{table:LASSO2_sin}
\centering     
\begin{tabular}{l c c c c c c c}  
\hline\hline  
& \multicolumn{2}{c}{$g_1$} & \multicolumn{2}{c}{$g_2$} \\
\cmidrule(r){2-3}  \cmidrule(r){4-5}
test & correct & incorrect  & correct & incorrect \\ [0.5ex]  
\hline
SCAD &  6.74  & .96 & 5.71 & 1.79\\ 
LASSO    & 6.59  & .92 & 5.72 & 1.80\\                   
Adaptive LASSO & 6.65 & .95 & 5.62 & 1.73\\
BWA & 6.99 & 1 & 5.99 & 1.99\\
BWA+BE & 6.65 & .94 & 5.70 & 1.75\\
ANOVA-type(p=7)& 6.21 & 0.001 & 5.75 & .11\\
ANOVA-type(p=5)& 6.39 & 0.001 & 5.71 & .08\\
\hline   
\hline    
\end{tabular} 
\end{table}
\renewcommand{\baselinestretch}{1.7}
The covariates are generated as normal with marginal means zero and covariance matrix $\Sigma=(0.5^{|i-j|})$. It is seen that the linear model based procedures fail to select the significant predictor(s) almost always. On the other hand, the proposed procedures always select the one relevant predictor under model $g_1$, and exclude incorrectly, on average, about 0.08 out of the two important predictors under model $g_2$.  

\subsection{Real Data Example: Body Fat Dataset}\label{sec.real_data}

\indent \indent The Body Fat data is supplied by Dr. A. Garth Fisher for non-commercial purposes, and it can be found at "http://lib.stat.cmu.edu/datasets/bodyfat". The data set contains 
measurements of percent body fat (using Siri's (1956) method),
 Age (years),
 Weight (lbs),
 Height (inches), circunferences of
 Neck (cm),
 Chest (cm),
 Abdomen (cm),
 Hip (cm),
 Thigh (cm),
 Knee (cm),
 Ankle (cm),
 Biceps (cm),
 Forearm (cm) and
 Wrist (cm), from 252 men. The response variable is the percentage of body fat.\\
\indent  We compare the results of SCAD, LASSO, Adaptive LASSO and BWA with backward elimination to the ANOVA-type procedure with variable screening and SIR, as described in  Section \ref{section:Benjamini}. Table \ref{table:BodyFat} shows the results for LASSO, SCAD, Adaptive LASSO and BWA.
\renewcommand{\baselinestretch}{1}
\begin{table}[ht] \label{table:BodyFat}
\caption{Results for LASSO, Adaptive LASSO, SCAD, BWA}  
\centering     
\begin{tabular}{l c c c c c}  
\hline\hline  
Predictor &  LASSO & Adpt. LASSO & SCAD &  BWA&  \\
 Age    &  .06499 & 0 & .001061    &  0 & \\
 Weight &   0 & -.09511 & -.11688 &  -.1356 & \\ 
 Height &   -.1591 & 0 & -.05818 & 0 & \\ 
 Neck   &   -.2579 & 0 & 0 & 0  & \\
 Chest  &    0 & 0 & 0 &   0& \\
 Abdomen &  .7079 & .9113 & .9052 & .9958   &  \\ 
 Hip    &   0 & 0 & 0 &  0&   \\
 Thigh  &   0 & 0 & 0 &0  &    \\
 Knee   &   0 & 0 & 0 &  0&   \\
 Ankle  &   0 & 0 & 0 &  0&    \\
 Biceps  &  0 & 0 & 0 &  0&   \\
 Forearm &   .21756 & 0 & 0  & .4729 & \\ 
 Wrist  &  -1.5353 & -.9871 & 0 & -1.5056 & \\
\hline
\hline    
\end{tabular} 
\end{table} 
\renewcommand{\baselinestretch}{1.7}
\vspace{-2mm}
It is seen that Weight and Abdomen are selected by all except LASSO. This can be explained by the fact that LASSO does not perform well in the presence of highly correlated variables, which is the case with this data set. 
The Adaptive LASSO and BWA give almost the same results but differ considerably from those of SCAD. 

\indent For the ANOVA-type method we used SIR with the number of slices ranging from 2 to 100.
Abdomen, Weight, Biceps and Knee were selected with 99, 87, 88 and 23 of the 99 different numbers of slices, respectively. All other variables were selected less than 15 
times. On the basis of these results we recommend a model based on Abdomen, Weight and Biceps.

As an explanation of the fact that Biceps was not selected by any of the other methods, we investigated possible violations of the modeling assumptions on which they are based. Marginal plots of the response versus each of the important variables reveal both hetoroscedasticity and nonlinearity.  Moreover, the 99 applications of SIR yielded more than one linear combination (i.e. $K>1$) 50 times. To put this number into perspective, we generated a single set of responses, using the same covariate values with coefficients those from Adaptive LASSO and normal errors using the residual variance. Application of SIR with the number of slices ranging from 2 to 100 on this data set yielded $K=1$ 92 out of the 99 times. This casts serious doubts on the validity of the assumption of a linear model.
 
\vspace{.5cm}
\appendix

\noindent{\huge{{\bf Appendix}}}

\vspace{-8mm}

\section{Auxiliary Results}

\begin{lemma} \label{lemma:Fn} 
Let $X_1,\ldots,X_n$ be iid[$F$], and let $\hat{F}_n(x)$ be the corresponding empirical distribution function. Then, for any constant $c$, $$sup_{x_i,x_j}\left\{|F(x_i) - F(x_j)|I\left[|\hat{F}(x_i) - \hat{F}(x_j)| \leq \frac{c}{n}\right]\right\} = O_p\left(\frac{1}{\sqrt{n}}\right).$$
\end{lemma}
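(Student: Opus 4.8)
The plan is to reduce the statement to the uniform rate of convergence of the empirical distribution function, via a single triangle inequality. The indicator restricts attention to those pairs $(x_i,x_j)$ for which the empirical masses $\hat F(x_i)$ and $\hat F(x_j)$ differ by at most $c/n$; on the complementary set the quantity inside the supremum is identically zero and contributes nothing. Hence it suffices to bound $|F(x_i)-F(x_j)|$ uniformly over all pairs lying in the (data-dependent) event $\{|\hat F(x_i)-\hat F(x_j)|\le c/n\}$.

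First I would insert $\hat F$ into the difference $F(x_i)-F(x_j)$ and apply the triangle inequality to obtain, on that event,
$$|F(x_i)-F(x_j)| \le |F(x_i)-\hat F(x_i)| + |\hat F(x_i)-\hat F(x_j)| + |\hat F(x_j)-F(x_j)| \le 2\sup_x|\hat F(x)-F(x)| + \frac{c}{n}.$$
The point to stress is that the right-hand side is deterministic given the sample and free of $i,j$, so it bounds the quantity for every admissible pair simultaneously, notwithstanding that the index set over which we take the supremum is itself random (determined by $\hat F$).

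The second and final step is to control $\sup_x|\hat F(x)-F(x)|$. By the Dvoretzky--Kiefer--Wolfowitz inequality (equivalently, the asymptotic tightness of the empirical process $\sqrt n(\hat F-F)$), one has $\sup_x|\hat F(x)-F(x)|=O_p(n^{-1/2})$. Since $c/n = o(n^{-1/2})$, the displayed bound is $O_p(n^{-1/2})$, and taking the supremum over $x_i,x_j$ (noting the term vanishes off the event) yields the claim. Here it is immaterial whether $x_i,x_j$ range over all reals or only over the observations $X_1,\dots,X_n$, since the DKW bound is uniform in $x$.

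As for difficulty, there is essentially no hard step: the argument is a one-line triangle inequality followed by a standard empirical-process bound. The only point requiring a moment's care is the observation that the bound from the triangle inequality is uniform in $(x_i,x_j)$ and deterministic given the sample, so the supremum over the random index set can be dispatched by a single application of the DKW inequality, with no need for a union bound or a more delicate maximal inequality.
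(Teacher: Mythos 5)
Your proposal is correct and follows essentially the same route as the paper's own proof: both add and subtract $\hat F_n$ at the two points, apply the triangle inequality so that the indicator controls the middle term by $c/n$, and invoke the Dvoretzky--Kiefer--Wolfowitz bound to make the two empirical-process terms $O_p(n^{-1/2})$ uniformly. Your remark that the resulting bound is deterministic given the sample and uniform over the random index set is exactly the (implicit) justification the paper relies on as well.
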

\begin{proof}  By the Dvoretzky, Kiefer and Wolfowitz (1956) theorem, we have that 
$\forall \epsilon \geq 0,$
$$
 P\left(\sup_{x}|\hat{F}_n(x) - F(x)| \geq \epsilon\right) \leq Ce^{-2n\epsilon^2}.
$$
Therefore, $|\hat{F}(x) - F(x)| = O_p\left(\frac{1}{\sqrt{n}}\right)$ uniformly on $x$. Hence, writing
\begin{eqnarray}
|F(x_i) - F(x_j)| &=& |F(x_i) -\hat{F}_n(x_i) + \hat{F}_n(x_i)- F(x_j) + \hat{F}_n(x_j) - \hat{F}_n(x_j)|, \nonumber 
\end{eqnarray}
it follows that $sup_{x_i,x_j}\left\{|F(x_i) - F(x_j)|I\left[|\hat{F}(x_i) - \hat{F}(x_j)| \leq c/n\right]\right\}$ is less than or equal to
\bqa
 &  & sup_{x_i,x_j}\left\{|F(x_i) -\hat{F}_n(x_i)| + |\hat{F}_n(x_j) - F(x_j)|\right\} 
 \nonumber \\
 &&+
  sup_{x_i,x_j}\left\{|\hat{F}_n(x_i)- \hat{F}_n(x_j)|\right\}I\left[|\hat{F}_n(x_i) - \hat{F}_n(x_j)| \leq \frac{c}{n}\right] \nonumber \\
 &=&  O_p\left(\frac{1}{\sqrt{n}}\right) + O_p\left(\frac{1}{\sqrt{n}}\right) + O_p\left(\frac{1}{n}\right).  
\eqa
This completes the proof of the lemma.\end{proof}

\begin{lemma} \label{lemma:Op} 
With $W_i$ be defined in (\ref{def.Wi}), and any Lipschitz continuous function $g(x)$,  
\begin{equation} \label{eqn:Op}
\frac{1}{p}\sum_{j=1}^{n}g(x_{2j})I(j \in W_i) - g(x_{2i}) = O_p\left(\frac{1}{\sqrt{n}}\right), \nonumber
\end{equation}
uniformly in $i=1,\ldots,n$.
\end{lemma}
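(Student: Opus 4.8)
The plan is to exploit the fact that, for an interior index $i$ (those with $(p-1)/2 < i \le n-(p-1)/2$ for which $W_i$ is defined in (\ref{def.Wi})), the window $W_i$ contains exactly $p$ indices. Since $X_2$ is continuous there are no ties almost surely, so $\hat F_{X_2}$ takes the value $r/n$ at the $r$th order statistic, and the condition $|\hat F_{X_2}(x_{2j}) - \hat F_{X_2}(x_{2i})| \le (p-1)/(2n)$ is equivalent to the rank of $x_{2j}$ differing from that of $x_{2i}$ by at most $(p-1)/2$; this selects precisely the $(p-1)/2$ nearest neighbors on each side together with $x_{2i}$ itself, i.e. $p$ indices. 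Consequently $p^{-1}\sum_{j} I(j \in W_i) = 1$, and the left-hand side of (\ref{eqn:Op}) rewrites as the centered average $p^{-1}\sum_{j}[g(x_{2j}) - g(x_{2i})]I(j \in W_i)$, whose absolute value is at most $\max_{j \in W_i}|g(x_{2j}) - g(x_{2i})|$.

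First I would use the Lipschitz continuity of $g$ to reduce the problem to controlling the spatial spread of each window: $|g(x_{2j}) - g(x_{2i})| \le L\,|x_{2j} - x_{2i}|$ for $j \in W_i$, where $L$ is the Lipschitz constant, so it suffices to prove $\max_{j \in W_i}|x_{2j} - x_{2i}| = O_p(n^{-1/2})$ uniformly in $i$. The bridge to this is Lemma \ref{lemma:Fn}: applied with $c = (p-1)/2$, it gives that, uniformly over all pairs, $|F_{X_2}(x_{2j}) - F_{X_2}(x_{2i})| = O_p(n^{-1/2})$ whenever $|\hat F_{X_2}(x_{2j}) - \hat F_{X_2}(x_{2i})| \le (p-1)/(2n)$, that is, whenever $j \in W_i$. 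To convert this bound on distribution-function distances into a bound on covariate distances I would invoke the assumption (carried over from Theorem \ref{thm:main_X1}) that $f_{X_2}$ is bounded away from zero: setting $c_0 := \inf_x f_{X_2}(x) > 0$ makes $F_{X_2}^{-1}$ Lipschitz with constant $c_0^{-1}$, whence $|x_{2j} - x_{2i}| \le c_0^{-1}|F_{X_2}(x_{2j}) - F_{X_2}(x_{2i})| = O_p(n^{-1/2})$ for every $j \in W_i$.

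Chaining these bounds yields $\max_{j \in W_i}|g(x_{2j}) - g(x_{2i})| \le L c_0^{-1}\sup\{|F_{X_2}(x_{2j}) - F_{X_2}(x_{2i})| : j \in W_i\} = O_p(n^{-1/2})$, which is (\ref{eqn:Op}). The uniformity in $i$ is automatic, since the bound supplied by Lemma \ref{lemma:Fn} is a supremum over all pairs $(x_i,x_j)$ and therefore does not depend on $i$. I do not expect a serious obstacle here; the only point requiring genuine care is the count $|W_i| = p$ that underlies the centered-average rewriting, which is a small piece of rank bookkeeping valid for each interior index under the no-ties (continuity) assumption. The remainder of the argument is the direct composition of the Lipschitz estimates with Lemma \ref{lemma:Fn}.
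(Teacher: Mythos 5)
Your proposal is correct and follows essentially the same route as the paper's proof: reduce via the Lipschitz property of $g$ to the covariate spacings $|x_{2j}-x_{2i}|$, convert these to CDF spacings $|F_{X_2}(x_{2j})-F_{X_2}(x_{2i})|$ using the mean value theorem together with $f_{X_2}$ bounded away from zero, and then invoke Lemma \ref{lemma:Fn} for the uniform $O_p(n^{-1/2})$ rate. The only cosmetic difference is that you make explicit the window-cardinality bookkeeping ($|W_i|=p$, needed to center the average), which the paper uses implicitly.
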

\begin{proof} First note that by the Lipschitz continuity and the Mean Value Theorem we have
\begin{eqnarray} \label{eqn:Op2}
&& |g(x_{2j}) - g(x_{2i})| \leq M|x_{2j} - x_{2i}| \leq M|F_{X_2}(x_{2j}) - 
F_{X_2}(x_{2i})|/f_{X_2}(\tilde{x}_{ij}), \nonumber
\end{eqnarray} 
for some constant $M$,
where $\tilde x_{ij}$ is between $x_{2j}$ and $x_{2i}$. Thus, 
\bqa
&&\hspace{-4mm}\left |\frac{1}{p}\sum_{j=1}^{n}g(x_{2j})I(j \in W_i) - g(x_{2i})\right| 
\leq \frac{1}{p}\sum_{j=1}^{n}|g(x_{2j}) - g(x_{2i})|I\left[|\hat{F}_{X_2}(x_{2i}) - \hat{F}_{X_2}(x_{2j})| \leq \frac{p-1}{2n}\right] \nonumber \\
  & &\leq  \frac{M}{p}\sum_{j=1}^{n}\frac{|F_{X_2}(x_{2j}) - F_{X_2}(x_{2i})|}{f_{X_2}(\tilde{x}_{ij})}I\left[|\hat{F}_{X_2}(x_{2i}) - \hat{F}_{X_2}(x_{2j})| \leq \frac{p-1}{2n}\right] = O_p\left(\frac{1}{\sqrt{n}}\right), \nonumber
\eqa
where the last equality follows from Lemma \ref{lemma:Fn} and the assumption that $f_{X_2}$ remains bounded away from zero.
\end{proof}
\vspace{-3mm}
As in Wang, Akritas and Van Keilegom (2008), MST-MSE given in (\ref{rel.proposedTS}) can be written as a quadratic form $\hat\vxi_V'A\hat\vxi_V$, where
\bqan\label{def.matrixA}
A = \frac{np-1}{n(n-1)p(p-1)}\oplus_{i=1}^{n}\bold{J}_{p}-\frac{1}{n(n-1)p}\bold{J}_{np}-\frac{1}{n(p-1)}\bold{I}_{np},
\eqan
where $\bold{I}_d$ is a identity matrix of dimension d, $\bold{J}_d$ is a dxd matrix of 1's and $\oplus$ is the Kronecker sum or direct sum. Using arguments similar to those used in the proof of Lemma 3.1 in Wang, Akritas and Van Keilegom (2008), it can be shown that if  $\sigma^2(.,x_2)$, defined in Theorem \ref{thm:main_X1},  is Lipschitz continuous and 
$E(\epsilon_i^4)<\infty$ then, under $H_0$ and as n $\rightarrow \infty$,
\bqan\label{lemma:Ad} 
n^{1/2}[\vxi_V'A\vxi_V - \vxi_V'A_d\vxi_V]\pkonv 0,
\eqan
where $A_d = diag\{B_1,...,B_n\}$, with $B_i = \frac{1}{n(p-1)}[\bold{J}_p - \bold{I}_p].$

\begin{lemma}\label{lem.bandwidth}  For a symmetric, positive definite bandwidth matrix $H_n$, define the norm $||H_n||$ to be the maximum of its eigenvalues. Then we have 
\bqa
\sum_{j_2=1}^{n}w(\textbf{X}_{1i},\textbf{X}_{1j_2})||\textbf{X}_{1j_2} - \textbf{X}_{1i}|| = O(||H_n^{1/2}||).
\eqa
\end{lemma}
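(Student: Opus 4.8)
The plan is to read $w(\textbf{X}_{1i},\textbf{X}_{1j_2})$ as the Nadaraya--Watson weight implicit in (\ref{rel.NW}), namely
$$w(\textbf{X}_{1i},\textbf{X}_{1j_2})=\frac{K_{H_n}(\textbf{X}_{1i}-\textbf{X}_{1j_2})}{\sum_{l=1}^{n}K_{H_n}(\textbf{X}_{1i}-\textbf{X}_{1l})},$$
and to exploit two elementary features of these weights: for a nonnegative kernel they are nonnegative, and they sum to one over $j_2$. Consequently the left-hand side is a genuine weighted average of the distances $||\textbf{X}_{1j_2}-\textbf{X}_{1i}||$, and it suffices to exhibit one deterministic bound, of order $||H_n^{1/2}||$, valid for every index $j_2$ that carries positive weight. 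I emphasize that no probabilistic (i.e. $O_p$) control is needed here, since the resulting bound is uniform over the realization of the sample.

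The key step is the bounded support of $K$, assumed with (\ref{rel.NW}): fix $R$ with $K(\textbf{u})=0$ whenever $||\textbf{u}||>R$. Then $w(\textbf{X}_{1i},\textbf{X}_{1j_2})$ is strictly positive only when the argument of $K$ lies in its support, i.e. only when $||H_n^{-1/2}(\textbf{X}_{1i}-\textbf{X}_{1j_2})||\le R$. Writing $\textbf{X}_{1i}-\textbf{X}_{1j_2}=H_n^{1/2}\big(H_n^{-1/2}(\textbf{X}_{1i}-\textbf{X}_{1j_2})\big)$ and using submultiplicativity of the operator norm, together with $||H_n^{1/2}||=\max_i\lambda_i^{1/2}$, yields $||\textbf{X}_{1j_2}-\textbf{X}_{1i}||\le ||H_n^{1/2}||\,R$ on the set where the weight does not vanish. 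Combining this with the fact that the weights are nonnegative and sum to one gives at once
$$\sum_{j_2=1}^{n}w(\textbf{X}_{1i},\textbf{X}_{1j_2})||\textbf{X}_{1j_2}-\textbf{X}_{1i}||\le ||H_n^{1/2}||\,R\sum_{j_2=1}^{n}w(\textbf{X}_{1i},\textbf{X}_{1j_2})=||H_n^{1/2}||\,R=O(||H_n^{1/2}||).$$

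Because the argument is this short, the only points needing care are bookkeeping rather than a genuine obstacle, and I would flag three of them. First, one should check the weights are well defined, i.e. the denominator does not vanish; this holds since the term $j_2=i$ contributes $|H_n|^{-1}K(\textbf{0})>0$ for the usual choice $K(\textbf{0})>0$. Second, if $K$ is only of bounded variation and not sign-definite, the weights need not be nonnegative; one then bounds $\big|\sum_{j_2}w||\textbf{X}_{1j_2}-\textbf{X}_{1i}||\big|$ by $||H_n^{1/2}||\,R\sum_{j_2}|w|$ and controls $\sum_{j_2}|w|$ through a lower bound on the denominator furnished by the assumption that $f_{\textbf{X}_1}$ is bounded away from zero, at the cost of turning $O(\cdot)$ into $O_p(\cdot)$. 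Third, and most substantively, the exponent in the stated rate hinges on the scaling convention: it is the appearance of $H_n^{-1/2}$ (equivalently, measuring the increment in the metric induced by $H_n^{-1}$) in the support condition that converts the support radius into the factor $||H_n^{1/2}||$ rather than $||H_n||$. Keeping that convention consistent with the one under which $H_n^{1/2}$ enters the statement is the single place where the proof can go wrong.
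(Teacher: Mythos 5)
Your proof is correct and follows essentially the same route as the paper's: both exploit the bounded support of $K$ to bound $||\textbf{X}_{1j_2} - \textbf{X}_{1i}||$ by $||H_n^{1/2}||$ times the support radius on the set where the weight is nonzero, and then use the normalization of the Nadaraya--Watson weights. Your additional remarks (nonvanishing denominator, sign of the kernel, and the $H_n^{-1/2}$ versus $H_n^{-1}$ scaling convention, which is indeed inconsistent between the paper's display (\ref{rel.NW}) and its proof of the lemma) are sensible bookkeeping points that the paper glosses over.
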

\vspace{-4mm}
\begin{proof}  Let  $b$ be such that 
$K(\vx) = K(\vx)I(||\vx||\le \sqrt{d-1}b)$. Such a $b$ exists by the assumption that the density $K$ has bounded support. Thus,
\bqa
&&K\left(H_n^{-1/2}(\textbf{X}_{1i} - \textbf{X}_{1j})\right)||\textbf{X}_{1j} - \textbf{X}_{1i}||  = K\left(H_n^{-1/2}(\textbf{X}_{1i} - \textbf{X}_{1j})\right)||H_n^{1/2}H_n^{-1/2}(vX_{1i}-\vX_{1j})||\\
&\le&K\left(H_n^{-1/2}(\textbf{X}_{1i} - \textbf{X}_{1j})\right)||H_n^{1/2}|| ||H_n^{-1/2}(vX_{1i}-\vX_{1j})||\\
&\le&K\left(H_n^{-1/2}(\textbf{X}_{1i} - \textbf{X}_{1j})\right)||H_n^{1/2}||\sqrt{d-1}b.
\eqa
The statement of the lemma follows from the above.
\end{proof}

\vspace{-1cm}
\section{Proofs of Theorems}
\vspace{-3mm}
\begin{proof}[\textbf{Proof of Theorem \ref{thm:main_X1}}]
\indent Under $H_0$ in (\ref{rel.H0}) we write
 \bqa
 \hat{\xi}_i &=& Y_i - \hat{m}_1(\textbf{X}_{1i}) + m_1(\textbf{X}_{1i}) - m_1(\textbf{X}_{1i}) = \xi_i - (\hat{m}_1(\textbf{X}_{1i}) - m_1(\textbf{X}_{1i})) \\
 &=& \xi_i - \Delta_{m_1}(\textbf{X}_{1i}),
 \eqa
  where $\Delta_{m_1}(\textbf{X}_{1i})$ is defined implicitly in the above relation. Thus, $\hat\vxi_V$ of relation (\ref{hat.vxi}) is decomposed as
$
  \hat\vxi_V=\vxi_V - \vDelta_{m_1V},
$
  where $\vxi_V$ and $\vDelta_{m_1V}$ are defined as in (\ref{hat.vxi}) but using 
  $\xi_i$ and $\Delta_{m_1}(\textbf{X}_{1i})$, respectively, instead of $\hat\xi_i$. Thus $\sqrt{n}(\mbox{MST - MSE})$ can be written as 
  \bqan\label{rel.decom.T}
 \sqrt{n} \hat \vxi_V'A\hat\vxi_V = \sqrt{n}\vxi_V'A\vxi_V -\sqrt{n} 2\vxi_V'A\vDelta_{m_1V} +\sqrt{n} \vDelta_{m_1V}'A\vDelta_{m_1V},
  \eqan
  where the matrix $A$ is defined in (\ref{def.matrixA}).
 The asymptotic normality of $\sqrt{n}\vxi_V'A\vxi_V$ follows by arguments similar to those used in Theorem 3.2 of Wang, Akritas and VanKeilegom (2008). It remains to derive its asymptotic variance and to show that the other two terms  in (\ref{rel.decom.T}) converge to zero in probability.
 Using (\ref{lemma:Ad}) it suffices to find the asymptotic variance of 
$ \sqrt{n}\vxi_V'A_d\vxi_V$. Since $E(\vxi_V'A_d\vxi_V)=0$ its variance equals
$E[(\sqrt{n}\vxi_V'A_d\vxi_V)^2]$. To find this we first evaluate its conditional expectation, 
$E[(\sqrt{n}\vxi_V'A_d\vxi_V)^2| \{X_{2j}\}_{j=1}^n]$, given $X_{21},\ldots,X_{2n}$. 
Recalling the notation
$\sigma^2(.,x_{2}) = E(\xi^2|X_2=x_2)$, we have
 \vspace{-.1mm}
\bqan\label{rel.Pf.Thm1.bounded}
&& \frac{1}{n(p-1)^2}\sum_{i_1,i_2}^{n}\sum_{j_1 \neq l_1}^{n}\sum_{j_2 \neq l_2}^{n}E(\xi_{j_1}\xi_{l_1}\xi_{j_2}\xi_{l_2}|\{X_{2j}\}_{j=1}^n)I(j_s \in W_{i_s},l_s \in W_{i_s},s=1,2)
\nonumber\\[2mm]
&& = \frac{2}{n(p-1)^2}\sum_{i_1=1}^{n}\sum_{i_2=1}^{n}\sum_{j \neq l}^{n}\sigma^2(.,x_{2j})\sigma^2(.,x_{2l})I(j,l \in W_{i_1}\cap W_{i_2})  \\
&& = \frac{2}{n(p-1)^2}\sum_{i_1=1}^{n}\sum_{i_2=1}^{n}\sum_{j \neq l}^{n}\sigma^2(.,x_{2j})\left(\sigma^2(.,x_{2j}) + O_p\left(\frac{p}{\sqrt{n}}\right)\right)I(j,l \in W_{i_1}\cap W_{i_2}) \nonumber \\
&& = \frac{2}{n(p-1)^2}\sum_{j=1}^{n}\sigma^4(.,x_{2j})\sum_{i_1=1}^{n}\sum_{i_2=1}^{n}\sum_{l \neq j}^{n}I(j,l \in W_{i_1}\cap W_{i_2}) + O_p\left(\frac{p^2}{n^{1/2}}\right) \nonumber \\
&& = \frac{2}{n(p-1)^2}\sum_{j=1}^{n}\sigma^4(.,x_{2j})2(1+2^2+3^2+...+(p-1)^2) + O_p\left(\frac{p^2}{n^{1/2}}\right) \nonumber \\
&& = \frac{2}{n(p-1)^2}\frac{p(p-1)(2p-1)}{3}\sum_{j=1}^{n}\sigma^4(.,x_{2j}) + O_p\left(\frac{p^2}{n^{1/2}}\right), \nonumber
\eqan
where the third equality follows from Lemma \ref{lemma:Op} using the assumption that $\sigma^2(.,x_2)$ is Lipschitz continuous and the second last inequality results from the fact that if $1 \leq |j_1-j_2| = s \leq p-1$, then they are $(p-s)^2$ pairs of windows whose intersection includes $j_1$ and $j_2$. Taking limits as $n\to\infty$ it is seen that
 \vspace{-1mm}
\begin{eqnarray}
E\left(n^{1/2}\vxi_V'A_d\vxi_V\right|X_2 = x_2)^2 &\askonv & \frac{2(2p-1)}{3(p-1)}E(\sigma^4(.,X_2)) 
= \frac{2(2p-1)}{3(p-1)}\tau^2.
\end{eqnarray}
From relation (\ref{rel.Pf.Thm1.bounded}) it is easily seen that 
$E[(\sqrt{n}\vxi_V'A_d\vxi_V)^2| \{X_{2j}\}_{j=1}^n]$ remains bounded, and thus
$\Var(n^{1/2}\xi_V'A\xi_V )$ also converges to the same limit by the Dominated Convergence Theorem.
Hence, $n^{1/2}\xi_V'A\xi_V $ converges in distribution to the designated normal distribution. That the second and third terms in (\ref{rel.decom.T}) converge in probability to zero are shown in Lemmas \ref{lem.pf.2nd.A.14}, \ref{lem.pf.3rd.A.14}  , respectively.
\end{proof}

\begin{proof}[\textbf{Proof of Theorem \ref{thm:additive_alternative}.}] 

\subsubsection*{Part 1: Local Additive Alternatives}
 
\indent Note that we can write $\hat{\xi}_j=Y_j - \hat{m}_1(\vX_{1j})$ as
\bqan\label{rel.pf.repr.hatxi}
\hat{\xi}_j 
&=& Y_j - m_1(\vX_{1j}) - \rho_n \tilde m_2(X_{2j}) - [\hat{m}_1(\vX_{1j})- m_1(\vX_{1j})] + \rho_n \tilde m_2(X_{2j}) \nonumber \\
&=& \xi_j - \Delta_{m_1}(\vX_{1j}) + \rho_n \tilde m_2(X_{2j}),
\eqan
and therefore 
$
\hat\vxi_V = \vxi_V - \vDelta_{m_1V} + \rho_n \tilde \vm_{2V}$,
where $\vxi_V$, $\vDelta_{m_1V}$ and $\tilde \vm_{2V}$ are defined as in (\ref{hat.vxi}) but using 
  $\xi_i$, $\Delta_{m_1}(\textbf{X}_{1i})$ and $\tilde m_{2}(X_{2i})$, respectively, instead of $\hat\xi_i$.
Thus, we can write 
\bqan\label{additive.terms}
\sqrt{n}(MST - MSE)&=&
\sqrt{n}\hat\vxi_V'A\hat\vxi_V = \sqrt{n}(\vxi_V - \vDelta_{m_1V})'A(\vxi_V - \vDelta_{m_1V}) + \nonumber \\
&+& \sqrt{n}2\rho_n (\vxi_V - \vDelta_{m_1V})'A\tilde \vm_{2V} + \sqrt{n}\rho_n^2 \tilde \vm_{2V}'A\tilde \vm_{2V}. 
\end{eqnarray}
By Theorem \ref{thm:main_X1}, $\sqrt{n}(\vxi_V - \vDelta_{m_1V})'A(\vxi_V - \vDelta_{m_1V})\dkonv N(0, [2p(2p-1)\tau^2]/[3(p-1)])$.
That $\sqrt{n}2\rho_n (\vxi_V - \vDelta_{m_1V})'A\tilde\vm_{2V} \pkonv 0$ and $\sqrt{n}\rho_n^2 \tilde \vm_{2V}'A\tilde\vm_{2V} \pkonv a^2pV(\tilde m_2(X_2))$ are shown in Lemma \ref{lem.pf.Add.Second} and Lemma \ref{lem.pf.Add.Third}, respectively. This completes the proof of part 1.

\noindent {\bf Part 2: Local General Alternatives}

\noindent Working as in (\ref{rel.pf.repr.hatxi}) we can write 
$
\hat \vxi_V = \vxi_V - \vDelta_{m_1V} + \rho_{1n}\tilde \vm_{2V} + \rho_{2n}\tilde \vm_{12V}$,
where $\vxi_V$, $\vDelta_{m_1V}$, $\tilde \vm_{2V}$ and $\tilde \vm_{12V}$ are defined as in (\ref{hat.vxi}) but using 
  $\xi_i$, $\Delta_{m_1}(\textbf{X}_{1i})$, $\tilde m_{2}(X_{2i})$ and $\tilde m_{12}(\vX_{1i},X_{2i})$, respectively, instead of $\hat\xi_i$.
Thus $\sqrt{n}(MST - MSE)$ is
\begin{eqnarray} \label{general.terms}
 \sqrt{n}\hat \vxi_V'A\hat \vxi_V 
&=& \sqrt{n}(\vxi_V - \vDelta_{m_1V} - \rho_{1n}\tilde \vm_{2V})'A(\vxi_V - \vDelta_{m_1V} - \rho_{1n}\tilde \vm_{2V}) \nonumber \\
&+& \sqrt{n}2\rho_{2n} (\vxi_V - \vDelta_{m_1V} - \rho_{1n}\tilde \vm_{2V})'A\tilde \vm_{12V} + \sqrt{n}\rho_{2n}^2 \tilde \vm_{12V}'A\tilde \vm_{12V}. 
\end{eqnarray}
By Part 1 of the theorem, 
$\sqrt{n}(\vxi_V - \vDelta_{m_1V} - \rho_{1n}\tilde \vm_{2V})'A(\vxi_V - \vDelta_{m_1V} - \rho_{1n}\tilde \vm_{2V})$ converges in distribution to
\bqa
N(a^2pVar(m_2(X_2)), [2p(2p-1)\tau^2]/[3(p-1)]).
\eqa
Hence, it is enough to show that $\sqrt{n}\rho_{2n}^2 \tilde \vm_{12V}'A\tilde \vm_{12V} \pkonv pb^2Var(\tilde m_{12}(\vX_1,X_2))$ and 
$\sqrt{n}2\rho_{2n} (\vxi_V - \vDelta_{m_1V} - \rho_{1n}\tilde \vm_{2V})'A\tilde \vm_{12V} \pkonv 2pabCov(\tilde m_2(X_2),\tilde m_{12}(\vX_1,X_2))$. These are shown in Lemmas \ref{pf.General.Third} and \ref{pf.General.Second}, respectively.
\end{proof}

\vspace{-4mm}

\section{Some Detailed Derivations}

\begin{lemma}\label{lem.pf.2nd.A.14}
The second term in (\ref{rel.decom.T}) converges in probability to zero, i.e.
 \bqa
 T_{2n}:=\sqrt{n}\vxi_V'A\vDelta_{m_1V}\pkonv 0.
 \eqa
\end{lemma}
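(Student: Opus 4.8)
The plan is to show that the cross term $T_{2n}=\sqrt{n}\,\vxi_V'A\vDelta_{m_1V}$ converges to zero in probability by exploiting two facts: first, that the estimation error $\Delta_{m_1}(\vX_{1i})=\hat m_1(\vX_{1i})-m_1(\vX_{1i})$ is uniformly small because of the bandwidth conditions in (\ref{cond.d_le_4}), and second, that $\vxi_V$ has mean zero given the covariates, so the cross term is a mean-zero quadratic-type object whose variance can be controlled. Concretely, I would begin by invoking the asymptotic equivalence (\ref{lemma:Ad}), replacing $A$ by the block-diagonal $A_d=\mathrm{diag}\{B_1,\ldots,B_n\}$ with $B_i=\frac{1}{n(p-1)}[\bold{J}_p-\bold{I}_p]$; the difference $\sqrt{n}\,\vxi_V'(A-A_d)\vDelta_{m_1V}$ should be negligible by a Cauchy--Schwarz argument together with (\ref{lemma:Ad}) and a uniform bound on $\vDelta_{m_1V}$, so it suffices to handle $\sqrt{n}\,\vxi_V'A_d\vDelta_{m_1V}$.

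Next I would write the quadratic form explicitly in terms of the windows $W_i$. Using the structure of $B_i$, the term $\sqrt{n}\,\vxi_V'A_d\vDelta_{m_1V}$ becomes a double sum of the form
\begin{equation}
\frac{\sqrt{n}}{n(p-1)}\sum_{i=1}^{n}\sum_{\substack{j,\,l\in W_i\\ j\neq l}}\xi_{j}\,\Delta_{m_1}(\vX_{1l}). \nonumber
\end{equation}
Since $E(\xi_j\mid \vX)=0$ and the $\xi_j$ are conditionally mean-zero, the conditional expectation of this sum given all covariates vanishes, so it is enough to bound its conditional variance. I would compute $E\big[(T_{2n})^2\mid \vX\big]$ and show it tends to zero. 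The key inputs are: (a) the kernel-estimator error bound $\sup_i|\Delta_{m_1}(\vX_{1i})|=O_p\big(\|H_n^{1/2}\|+ (\log n)^{1/2}/(n\lambda^{(d-1)})^{1/2}\big)$, which is the standard uniform rate for the Nadaraya--Watson estimator, controlled by the bandwidth conditions (\ref{cond.d_le_4}); and (b) $\sup_{\vx}\sigma^2(\vx)<\infty$ together with the finite fourth-moment assumption, which bounds the $\xi_j$ contributions. Each window contains exactly $p$ indices, so the number of nonzero summands is $O(np^2)$, and the prefactor $\sqrt{n}/(n(p-1))$ combined with the smallness of $\Delta_{m_1}$ should drive the whole expression to zero.

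**The main obstacle** I anticipate is getting the uniform rate for $\sup_i|\Delta_{m_1}(\vX_{1i})|$ sharp enough to beat the counting factors from the overlapping windows and the $\sqrt{n}$ normalization. The bias part of $\Delta_{m_1}$ is of order $\|H_n\|$ (or $\|H_n^{1/2}\|$ by Lemma \ref{lem.bandwidth}), and since the conditional variance of $T_{2n}$ involves $\Delta_{m_1}$ essentially quadratically after squaring, I expect to need $n\lambda_i^8\to 0$ precisely to kill the bias-squared contribution, while $n\lambda_i^{2(d-1)}/(\log n)^2\to\infty$ is exactly what makes the variance part of the kernel estimator uniformly negligible. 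Matching these two rates against the $\sqrt{n}$ factor and the window-overlap counts is the delicate bookkeeping step. Once the conditional variance is shown to be $o_p(1)$, an application of the conditional Chebyshev inequality followed by the dominated convergence theorem (using that the conditional second moment stays bounded, exactly as in the proof of Theorem \ref{thm:main_X1}) yields $T_{2n}\pkonv 0$, completing the proof.
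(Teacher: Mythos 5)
Your proposal founders on a point that is in fact the central difficulty of this lemma: $\Delta_{m_1}(\vX_{1l}) = \hat m_1(\vX_{1l}) - m_1(\vX_{1l})$ is \emph{not} a function of the covariates alone, because the Nadaraya--Watson estimator $\hat m_1$ is built from the responses $Y_k = m(\vX_k)+\xi_k$ and therefore contains the errors. Writing $w_{lk}=w(\vX_{1l},\vX_{1k})$ for the kernel weights, one has
\begin{equation}
\Delta_{m_1}(\vX_{1l})=\sum_{k} w_{lk}\bigl(m_1(\vX_{1k})-m_1(\vX_{1l})\bigr)+\sum_{k} w_{lk}\,\xi_k, \nonumber
\end{equation}
so that $E\bigl[\xi_j\,\Delta_{m_1}(\vX_{1l})\mid \{\vX_k\}_{k=1}^n\bigr]=w_{lj}\,\sigma^2(\vX_j)\neq 0$. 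Hence your assertion that ``the conditional expectation of this sum given all covariates vanishes'' is false, and your variance computation, whose inputs (a) and (b) treat $\Delta_{m_1}$ as a covariate-measurable quantity with a uniform rate, is invalid as stated. Nor can the gap be closed by sharpening the uniform rate: under (\ref{cond.d_le_4}) the quantity $\sqrt{n}\,a_n$, with $a_n=(\log n/(n\lambda^{d-1}))^{1/2}+\lambda^2$ the Hansen rate, need not tend to zero (take $d=2$ and $\lambda=n^{-1/7}$: then $n\lambda^8\to 0$ and $n\lambda^{2}/(\log n)^2\to\infty$, yet $\sqrt{n}\lambda^2=n^{3/14}\to\infty$), so no argument using only $\sup_i|\Delta_{m_1}(\vX_{1i})|=O_p(a_n)$ together with window counting can control the diagonal-type term $n^{-1/2}\sum_i\xi_i\Delta_{m_1}(\vX_{1i})$. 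Cancellation must be exploited \emph{inside} $\Delta_{m_1}$, and that is exactly what the paper does: it substitutes the displayed decomposition, handles the bias part by the kind of conditional mean-zero/variance argument you envision (using Lipschitz continuity of $m_1$ and Lemma \ref{lem.bandwidth}, giving conditional variance $O(\|H_n^{1/2}\|^2)\to 0$), and handles the noise part, i.e.\ the degenerate quadratic $n^{-1/2}\sum_i\sum_j w_{ij}\xi_i\xi_j$, by a direct second-moment computation (display (\ref{3sums})) that uses uniform consistency of the kernel density estimator and yields $O(1/(n|H_n|))+O(1/(n^2|H_n|))\to 0$, which is where the condition $n\lambda^{2(d-1)}/(\log n)^2\to\infty$ (forcing $n|H_n|\to\infty$) actually enters. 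None of this machinery appears in your outline.

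A second, more repairable, flaw is the opening reduction from $A$ to $A_d$. Display (\ref{lemma:Ad}) concerns only the quadratic form in $\vxi_V$, and the bilinear Cauchy--Schwarz inequality $|x'My|\le (x'Mx)^{1/2}(y'My)^{1/2}$ requires $M$ to be positive semi-definite, which neither $A$ nor $A-A_d$ is (indeed $MST-MSE$ takes negative values); so the cross term $\sqrt{n}\,\vxi_V'(A-A_d)\vDelta_{m_1V}$ is not disposed of by citing (\ref{lemma:Ad}). The paper instead expands $\sqrt{n}\,\vxi_V'A\vDelta_{m_1V}$ explicitly into window, global-mean, and diagonal pieces, display (\ref{rel.1st.term}). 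The global-mean ($\bold{J}_{np}$) piece is the one place where your general strategy does succeed: it factors as $p\cdot O_p(1)\cdot o_p(1)$ using $n^{-1/2}\sum_i\xi_i=O_p(1)$ and the uniform rate, which is relation (\ref{rel.pf.2ndterm.rel.1st.term}). The window piece, however, reduces to finitely many copies of the diagonal piece and therefore needs the bias-plus-noise analysis described above.
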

\begin{proof} After some algebra it can be seen that
\bqan  \label{rel.1st.term}
\hspace{-0.8cm}T_{2n} &=& \frac{n^{-1/2}(np -1)}{(n-1)p(p-1)}\sum_{i=1}^{n}\sum_{j\in W_i}\xi_j\sum_{k\in W_i}\Delta_{m_1}(\textbf{X}_{1k}) \nonumber \\
 &&- \frac{n^{-1/2}p}{(n-1)}\sum_{i=1}^{n}\xi_i\sum_{j=1}^{n}\Delta_{m_1}(\textbf{X}_{1j}) -\ \frac{n^{-1/2}p}{(p-1)}\sum_{i=1}^{n}\xi_i\Delta_{m_1}(\textbf{X}_{1i}). 
\eqan
We will show that each of the three terms above converge in probability to zero conditionally on the set of observed predictors, $\{\vX_j\}_{j=1}^n$, and thus also unconditionally. Note that, because all windows $W_i$ are of finite size $p$, 
the first term on the 
right hand side of (\ref{rel.1st.term}) can be written as a finite sum of $p^2$ terms each of which is similar to the last term in (\ref{rel.1st.term}). Thus, it suffices to show that the last and second terms of (\ref{rel.1st.term}) converge to zero.
For notational simplicity, all expectations and variances in this proof are to be understood as conditional on $\{\vX_j\}_{j=1}^n$.  
For the last term in (\ref{rel.1st.term}) we have 
\begin{eqnarray} \label{eqn:prod}
n^{-1/2}\sum_{i=1}^{n}\xi_i\Delta_{m_1}(\textbf{X}_{1i}) 
&=& n^{-1/2}\sum_{i=1}^{n}\sum_{j=1}^{n}w(\textbf{X}_{1i},\textbf{X}_{1j})(m_1(\textbf{X}_{1j}) +\xi_j - m_1(\textbf{X}_{1i}))\xi_i \nonumber \\
&=& n^{-1/2}\sum_{i=1}^{n}\sum_{j=1}^{n}w(\textbf{X}_{1i},\textbf{X}_{1j})(m_1(\textbf{X}_{1j}) - m_1(\textbf{X}_{1i}))\xi_i \nonumber \\ 
&& +\ n^{-1/2}\sum_{i=1}^{n}\sum_{j=1}^{n}w(\textbf{X}_{1i},\textbf{X}_{1j})\xi_j\xi_i. 
\end{eqnarray}
The first term of the right hand side of (\ref{eqn:prod}) has zero expectation, so it suffices to show that its variance goes to zero. To this end, we write
\begin{eqnarray} 
&& \Var(\frac{1}{\sqrt{n}}\sum_{i=1}^{n}\sum_{j=1}^{n}\xi_i w(\textbf{X}_{1i},\textbf{X}_{1j})(m_1(\textbf{X}_{1j}) - m_1(\textbf{X}_{1i})))  \nonumber \\
&& = \frac{1}{n}\sum_{i=1}^{n}\sum_{j_1=1}^{n}\sum_{j_2=1}^{n}w(\textbf{X}_{1i},\textbf{X}_{1j_1})w(\textbf{X}_{1i},\textbf{X}_{1j_2})\mathsf{x} \nonumber 
\\
&& \hspace{3.5cm} \mathsf{x}(m_1(\textbf{X}_{1j_1}) - m_1(\textbf{X}_{1i}))(m_1(\textbf{X}_{1j_2}) - m_1(\textbf{X}_{1i}))Var(\xi_i) \nonumber \\ 
&& \leq \frac{M}{n}\sum_{i=1}^{n}\sum_{j_1=1}^{n}\sum_{j_2=1}^{n}w(\textbf{X}_{1i},\textbf{X}_{1j_1})w(\textbf{X}_{1i},\textbf{X}_{1j_2})\mathsf{x} \nonumber \\
&& \;\;\;\;\;\;\;\;\;\;\;\;\;\;\;\;\;\;\;\;\;\;\;\;\;\;\;\;\;\; \mathsf{x}(c||\textbf{X}_{1j_1}- \textbf{X}_{1i}||c||\textbf{X}_{1j_2}- \textbf{X}_{1i}||) \nonumber \\ 
&& = Mc^2O(||H_n^{1/2}||)O(||H_n^{1/2}||)  \nonumber, 
 \end{eqnarray}
for some constants $M$ and $c$, where the inequality holds by the assumed conditions for $m_1(\cdot)$, and the last equality follows from Lemma \ref{lem.bandwidth}. Thus, by the assumptions of Theorem \ref{thm:main_X1}  the first term of the right hand side of (\ref{eqn:prod}) goes in probability to zero. To show that the second term in (\ref{eqn:prod}) also goes to 0 in probability since, we will show that its second moment goes to zero. To this end, we write

\begin{eqnarray} \label{3sums}
&& E\left[\frac{1}{n}\sum_{i_1=1}^{n}\sum_{i_2=1}^{n}\sum_{j_1=1}^{n}\sum_{j_2=1}^{n}\xi_{i_1}\xi_{i_2}\xi_{j_1}\xi_{j_2} w(\textbf{X}_{1i_1},\textbf{X}_{1j_1})w(\textbf{X}_{1i_2},\textbf{X}_{1j_2})\right] \nonumber \\
&=&  \frac{1}{n}\sum_{i=1}^{n}\sum_{j=1}^{n}E(\xi_i^2\xi_j^2)\left[w(\textbf{X}_{1i},\textbf{X}_{1j})^2 +w(\textbf{X}_{1i},\textbf{X}_{1i})w(\textbf{X}_{1j},\textbf{X}_{1j})\right.\nonumber\\
&&\hspace{3.8cm}+\ \left. w(\textbf{X}_{1i},\textbf{X}_{1j})w(\textbf{X}_{1j},\textbf{X}_{1i}) \right]\nonumber \\
&& + \frac{1}{n}\sum_{i=1}^{n}E(\xi_i^4)w(\textbf{X}_{1i},\textbf{X}_{1i})w(\textbf{X}_{1i},\textbf{X}_{1i}) \nonumber \\
& \leq & \frac{M_1^2}{n}\sum_{i=1}^{n}\sum_{j=1}^{n}
\frac{c^2}{n^2|H_n|\hat f_{\vX_1}(\vX_{1i})}
\left[\frac{1}{\hat f_{\vX_1}(\vX_{1i})} +\frac{2}{\hat f_{\vX_1}(\vX_{1j})}
\right]\nonumber\\
&& + \frac{M_2}{n}\sum_{i=1}^{n}\frac{c^2}{n^2|H_n|\hat{f}_{\vX_1}(\textbf{X}_{1i})^2}
= O\left(\frac{1}{n|H_n|}\right) + O\left(\frac{1}{n^2|H_n|}\right) , 
\end{eqnarray}
for some constants $M_1,\ M_2$ and $c$, by the fact that $\hat{f}_1$ converges uniformly to $f$ a.s. in the compact 
support $S_{\textbf{X}_1}$ (Ruschendorf 1977). Thus, by the assumptions of Theorem \ref{thm:main_X1}  the second term of the right hand side of (\ref{eqn:prod}) goes in probability to zero.

\indent Consider now the second term in (\ref{rel.1st.term}). Since $n^{-1/2}
\sum_{i=1}^n\xi_i$ remains bounded in probability, its convergence to zero will follow if we show that $n^{-1}\sum_{k=1}^{n}\Delta_{m_1}(\textbf{X}_{1k})\pkonv 0$. For later use, we will actually show that  
 \bqan\label{rel.pf.2ndterm.rel.1st.term}
 \frac{1}{n^{3/4}}\sum_{k=1}^{n}\Delta_{m_1}(\textbf{X}_{1k})=
 \frac{1}{n^{3/4}}\sum_{k=1}^{n}(\hat{m}_1(\textbf{X}_{1k}) - m_1(\textbf{X}_{1k})) 
 \pkonv 0. 
 \eqan
For this we use (cf. Hansen, 2008) 
\bqan \label{kernel.regression.unif.conv.rate}
\sup_{\vx}|\hat{m}_1(\vx) - m_1(\vx)| = O_p(a_n)\ \mbox{ where }\ a_n = \left(\frac{\log n}{n
 \lambda ^{d-1}}\right)^{1/2} + \lambda^2,
\eqan
where $\lambda\to 0$ at the same rate as the eigenvalues $\lambda_i$, $i=1,\ldots,d-1$, of 
$H_n$.
Therefore, the term in the left hand side of (\ref{rel.pf.2ndterm.rel.1st.term}) is of order 
\bqa
\frac{1}{n^{3/4}}n O_p\left(\left(\frac{\log n}{n \lambda ^{d-1}}\right)^{1/2} + \lambda^2\right) = o_p(1),
\eqa
by the assumed conditions stated in (\ref{cond.d_le_4}). This completes the proof of Lemma \ref{lem.pf.2nd.A.14}.
\end{proof}

\begin{lemma}\label{lem.pf.3rd.A.14} The third term in (\ref{rel.decom.T}) converges in probability to zero, i.e.
 \bqa
 T_{3n}=\sqrt{n} \vDelta_{m_1V}'A\vDelta_{m_1V}\pkonv 0.
 \eqa
\end{lemma}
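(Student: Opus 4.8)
The plan is to expand the quadratic form using the three-part structure of the matrix $A$ in (\ref{def.matrixA}) and to control each resulting piece by the uniform convergence rate of the kernel estimator. Writing $\Delta_i := \Delta_{m_1}(\vX_{1i}) = \hat m_1(\vX_{1i}) - m_1(\vX_{1i})$, the Kronecker-sum term, the $\mathbf{J}_{np}$ term and the $\mathbf{I}_{np}$ term of $A$ give, respectively,
\[
T_{3n} = \frac{\sqrt{n}(np-1)}{n(n-1)p(p-1)}\sum_{i}\Big(\sum_{j\in W_i}\Delta_j\Big)^2 - \frac{\sqrt{n}}{n(n-1)p}\Big(\sum_{i}\sum_{j\in W_i}\Delta_j\Big)^2 - \frac{\sqrt{n}}{n(p-1)}\sum_{i}\sum_{j\in W_i}\Delta_j^2 .
\]
The single input I would use is the uniform rate (\ref{kernel.regression.unif.conv.rate}), which gives $\max_i|\Delta_i| = O_p(a_n)$ with $a_n = (\log n/(n\lambda^{d-1}))^{1/2} + \lambda^2$; here no conditioning on $\{\vX_j\}$ is needed, since this bound is already uniform in the argument.

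Next I would bound the three pieces, using that $p$ is a fixed constant and that each window $W_i$ contains exactly $p$ indices. In the first piece $(\sum_{j\in W_i}\Delta_j)^2 \le p^2\max_i\Delta_i^2$, there are at most $n$ values of $i$, and the coefficient is $O(1/(n(p-1)))$, so the piece is $O_p(\sqrt{n}\,a_n^2)$. In the third piece there are at most $np$ summands, each $O_p(a_n^2)$, and the coefficient is $1/(n(p-1))$, which again yields $O_p(\sqrt{n}\,a_n^2)$. For the middle piece the inner double sum has at most $np$ summands, so its square is $O_p(n^2 a_n^2)$, and the coefficient $O(1/(n^2 p))$ leaves $O_p(\sqrt{n}\,a_n^2)$ as well. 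Hence $T_{3n} = O_p(\sqrt{n}\,a_n^2)$.

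It then remains to show $\sqrt{n}\,a_n^2 \to 0$. Since $a_n^2 \le 2(\log n/(n\lambda^{d-1}) + \lambda^4)$, we have
\[
\sqrt{n}\,a_n^2 \le 2\left(\Big(\frac{(\log n)^2}{n\lambda^{2(d-1)}}\Big)^{1/2} + (n\lambda^{8})^{1/2}\right),
\]
and both summands vanish under (\ref{cond.d_le_4}): the first because $n\lambda^{2(d-1)}/(\log n)^2 \to \infty$, the second because $n\lambda^{8}\to 0$. This delivers $T_{3n}\pkonv 0$. The only delicate point is the bookkeeping of the overlapping windows $W_i$ and the orders of the three coefficients; once that is in place, the entire statement reduces to the single requirement $\sqrt{n}\,a_n^2\to 0$, which is precisely what the two bandwidth conditions in (\ref{cond.d_le_4}) are designed to guarantee.
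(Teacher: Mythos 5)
Your proof is correct and takes essentially the same route as the paper: expand $\sqrt{n}\,\vDelta_{m_1V}'A\vDelta_{m_1V}$ into the three pieces coming from the structure of $A$ in (\ref{def.matrixA}), bound every piece via Hansen's uniform rate $\sup_{\vx}|\hat m_1(\vx)-m_1(\vx)|=O_p(a_n)$ from (\ref{kernel.regression.unif.conv.rate}), and conclude from $\sqrt{n}\,a_n^2\to 0$ under (\ref{cond.d_le_4}). The only cosmetic differences are that the paper writes the $\mathbf{J}_{np}$ and $\mathbf{I}_{np}$ pieces using the fact that each index lies in (about) $p$ windows, and handles the middle piece by citing the already-established relation (\ref{rel.pf.2ndterm.rel.1st.term}) rather than re-deriving the same $O_p(a_n)$ bound directly.
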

\begin{proof} In this proof we will use $w_{ij}$ to denote $w(\vX_{1i},\vX_{1j})$. Writting
\begin{eqnarray}\label{rel.pf.Delta1}
T_{3n} & =& \frac{\sqrt{n}(np-1)}{n(n-1)p(p-1)}\sum_{i=1}^{n}\left(\sum_{j\in W_i}\Delta_{m_1}(\textbf{X}_{1j})\right)^2 \nonumber \\
&-&\frac{\sqrt{n}p}{n(n-1)}\left(\sum_{i=1}^{n}\Delta_{m_1}(\textbf{X}_{1i})\right)^2 - \frac{\sqrt{n}p}{n(p-1)}\sum_{i=1}^{n}\Delta_{m_1}^2(\textbf{X}_{1i}) ,
\end{eqnarray}
we have to show that each of the three terms on the right hand side of (\ref{rel.pf.Delta1}) converges to zero in probability. First notice that, because all windows $W_i$ are of finite size $p$, the first term on the right hand side of (\ref{rel.pf.Delta1}) can be written as a finite  sum of $p^2$ terms each of which is similar to the last term in (\ref{rel.pf.Delta1}). Therefore, to show that the first and third terms in (\ref{rel.pf.Delta1}) go to zero in probability it is enough to show that $n^{-1/2}\sum_{i=1}^{n}\Delta_{m_1}^2(\textbf{X}_{1i}) \pkonv 0$. Using (\ref{kernel.regression.unif.conv.rate}), it is easy to see that
\bqa
n^{-1/2}\sum_{i=1}^{n}\Delta_{m_1}^2(\textbf{X}_{1i}) \leq n^{1/2}O_p(a_n)^2 = o_p(1).
\eqa
That the second term on the right hand side of (\ref{rel.pf.Delta1}) converges in probability to zero follows directly from  (\ref{rel.pf.2ndterm.rel.1st.term}). 
\end{proof}

\begin{lemma} \label{lem.pf.Add.Second} The second term in (\ref{additive.terms}) converges in probability to zero, i.e.
 \bqa
 \sqrt{n}2\rho_n (\vxi_V - \vDelta_{m_1V})'A\tilde \vm_{2V} \pkonv 0.
 \eqa
\end{lemma}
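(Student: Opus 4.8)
The plan is to substitute $\rho_n = a n^{-1/4}$, so that the prefactor becomes $\sqrt{n}\,2\rho_n = 2a\,n^{1/4}$, and to split the claim into the two cross terms $2a\,n^{1/4}\,\vxi_V'A\tilde \vm_{2V}$ and $2a\,n^{1/4}\,\vDelta_{m_1V}'A\tilde \vm_{2V}$, showing that each converges in probability to zero. The first is a genuine noise-against-signal cross term (the true residuals $\xi_j$ paired with the function $\tilde m_2$, which is deterministic given the predictors), while the second is a bias-against-signal term to be controlled through the uniform rate for $\hat m_1 - m_1$.

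For the first term I would condition on the predictors $\{\vX_j\}_{j=1}^n$. Since each $\tilde m_2(X_{2j})$ is a function of the predictors, and $E(\xi_j\mid\{\vX_j\}_{j=1}^n)=0$ with the $\xi_j$ conditionally independent and $\Var(\xi_j\mid\{\vX_j\}_{j=1}^n)=\sigma^2(\vX_j)\le \sup_{\vx}\sigma^2(\vx)<\infty$, the bilinear form $\vxi_V'A\tilde \vm_{2V}$ is linear in the $\xi_j$ with conditional mean zero. Expanding $A$ into its three components from (\ref{def.matrixA}) exactly as was done for (\ref{rel.1st.term}) in Lemma \ref{lem.pf.2nd.A.14}, and using that each index $j$ belongs to at most $p$ windows $W_i$ together with the Lipschitz continuity of $\tilde m_2$ (Lemma \ref{lemma:Op}), I would collect, for each $j$, the total coefficient $c_j$ of $\xi_j$ and bound the conditional variance by $\sum_j \sigma^2(\vX_j)c_j^2$. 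For fixed $p$ the diagonal (identity) component and the within-cell block component each give $c_j$ of order $1/n$, so that $\sum_j \sigma^2(\vX_j)c_j^2 = O_p(1/n)$; the global (all-ones) component is smaller, because $\sum_{j'}\tilde m_2(X_{2j'})=O_p(\sqrt n)$ by a mean-zero central limit theorem. Hence $\Var(n^{1/4}\vxi_V'A\tilde \vm_{2V}\mid\{\vX_j\}_{j=1}^n)=n^{1/2}O_p(1/n)=O_p(n^{-1/2})\to 0$, and conditional Chebyshev disposes of this term.

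For the second term I would invoke the uniform convergence rate (\ref{kernel.regression.unif.conv.rate}), which gives $\max_i|\Delta_{m_1}(\vX_{1i})|\le \sup_{\vx}|\hat m_1(\vx)-m_1(\vx)|=O_p(a_n)$ with $a_n=(\log n/(n\lambda^{d-1}))^{1/2}+\lambda^2$. Pulling this uniform bound out of the bilinear form and bounding the remaining sums of $|\tilde m_2(X_{2j})|$ over the finite windows by $O_p(1)$ times the relevant window counts, the diagonal and within-cell components are $O_p(a_n)$; the global component factors as $(\sum_k\Delta_{m_1}(\vX_{1k}))(\sum_{j'}\tilde m_2(X_{2j'}))$ and is negligible by (\ref{rel.pf.2ndterm.rel.1st.term}) together with the $O_p(\sqrt n)$ bound. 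Thus the whole term is $n^{1/4}O_p(a_n)$, and it remains to verify $n^{1/4}a_n\to 0$. The two parts of (\ref{cond.d_le_4}) are tailored precisely for this: $n^{1/4}\lambda^2=(n\lambda^8)^{1/4}\to 0$ follows from $n\lambda_i^8\to 0$, while the square of $(\log n)^{1/2}/(n^{1/4}\lambda^{(d-1)/2})$ equals $\log n/(n^{1/2}\lambda^{d-1})$, which tends to zero by $n\lambda_i^{2(d-1)}/(\log n)^2\to\infty$.

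The step I expect to be the main obstacle is the conditional-variance bound for the first term: one must account for how often each $\xi_j$ is repeated across the overlapping windows and recombine the three components of $A$ without losing a power of $n$, so as to land on the order $1/n$ rather than something larger. The second term is comparatively routine once the uniform rate is in hand; its only delicate point is checking that the bandwidth conditions (\ref{cond.d_le_4}) are exactly what force $n^{1/4}a_n\to 0$.
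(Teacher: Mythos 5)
Your proposal is correct and follows essentially the same route as the paper's proof: both expand $A$ into its three components, dispose of the $\vxi_V$-cross term by a mean-zero variance/Chebyshev argument (using $E(\tilde m_2(X_2)\xi)=0$ and the $O_p(\sqrt{n})$ bound on $\sum_j \tilde m_2(X_{2j})$), and dispose of the $\vDelta_{m_1V}$-cross term via the uniform kernel rate (\ref{kernel.regression.unif.conv.rate}) together with the bandwidth conditions (\ref{cond.d_le_4}), which is exactly the computation behind (\ref{rel.pf.2ndterm.rel.1st.term}). The only difference is bookkeeping: you split $\vxi_V-\vDelta_{m_1V}$ into two cross terms at the outset and count window-overlap coefficients directly, whereas the paper keeps them together and uses Lemma \ref{lemma:Op} to replace each window sum of $\tilde m_2$ by $p\,\tilde m_2(X_{2i})+O(n^{-1/2})$, reducing everything to the simple averages $n^{-1}\sum_i \tilde m_2(X_{2i})\xi_i$, $n^{-1}\sum_i \tilde m_2(X_{2i})\Delta_{m_1}(\vX_{1i})$ and $n^{-1}\sum_i \tilde m_2(X_{2i})$.
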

\begin{proof} By the definition of the matrix $A$, we can write $(\vxi_V - \vDelta_{m_1V})'A\tilde \vm_{2V}$ as
\begin{eqnarray}
&& \frac{np-1}{n(n-1)p(p-1)}\sum_{i=1}^{n}\left[\sum_{j=1}^{n}\tilde m_2(X_{2j})I(j \in W_i)\right]\left[\sum_{k=1}^{n}(\xi_k - \Delta_{m_1}(\vX_{1k}))I(k \in W_i)\right] \nonumber \\
&& -\frac{1}{n(n-1)p}\left[p\sum_{i=1}^{n}\tilde m_2(X_{2i})\right]\left[p\sum_{i=1}^{n}(\xi_i - \Delta_{m_1}(\vX_{1i}))\right] \nonumber \\
&& - \frac{p}{n(p-1)}\sum_{i=1}^{n}\tilde m_2(X_{2i})(\xi_i - \Delta_{m_1}(\vX_{1i})). \nonumber
\end{eqnarray}
Using Lemma \ref{lemma:Op} and the fact that $\tilde m_2(\cdot)$ is Lipschitz continuous, the sum in the first term can be expressed as
\begin{eqnarray}
 && p\sum_{i=1}^{n}[\tilde m_2(X_{2i}) + O(n^{-1/2})]\left[\sum_{k=1}^{n}(\xi_k - \Delta_{m_1}(\vX_{1k}))I(k \in W_i)\right] \leq \nonumber \\
&\leq & p\sum_{k=1}^{n}\left[\sum_{i=1}^{n}\tilde m_2(X_{2i})I(i \in W_k)\right](\xi_k - \Delta_{m_1}(\vX_{1k})) + p^2O(n^{-1/2})\sum_{k=1}^{n}|(\xi_k - \Delta_{m_1}(\vX_{1k}))| \nonumber \\
&=& p^2\sum_{k=1}^{n}\tilde m_2(X_{2k})(\xi_k - \Delta_{m_1}(\vX_{1k})) + O_p(p^2n^{1/2}), \nonumber
\end{eqnarray}
so that
\begin{eqnarray}
\sqrt{n} \rho_n \tilde \vm_{2V}'A(\vxi_V - \vDelta_{m_1V}) &=& \frac{a n^{1.25}p}{n-1}\left[\frac{1}{n}\sum_{i=1}^{n}\tilde m_2(X_{2i})(\xi_i - \Delta_{m_1}(\vX_{1i}))\right] \nonumber \\
 &&\hspace{-1.5cm} -\frac{a n^{1.25}p}{n-1}\left[\frac{1}{n}\sum_{i=1}^{n}\tilde m_2(X_{2i})\right]\left[\frac{1}{n}\sum_{i=1}^{n}(\xi_i - \Delta_{m_1}(\vX_{1i}))\right] + O_p\left(\frac{1}{n^{1/4}}\right). \nonumber 
\end{eqnarray}
Using the fact that $E\left(\tilde m_2(X_{2i})\right) = E\left(\tilde m_2(X_{2i})\xi_i\right) = E(\xi_i) = 0$, relation (\ref{rel.pf.2ndterm.rel.1st.term}) and also that $n^{-3/4}\sum_{i=1}^{n}\tilde m_2(X_{2i})\Delta_{m_1}(\vX_{1i}) \pkonv 0$, as is shown in a similar way to (\ref{rel.pf.2ndterm.rel.1st.term}), completes the proof of the lemma.
\end{proof}

\begin{lemma} \label{lem.pf.Add.Third} The third term in (\ref{additive.terms}) converges in probability to $a^2pV(\tilde m_2(X_2))$, i.e.
 \bqa
 \sqrt{n}\rho_n^2 \tilde\vm_{2V}'A\tilde\vm_{2V} \pkonv a^2pV(\tilde m_2(X_2)).
 \eqa
\end{lemma}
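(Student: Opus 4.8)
The plan is to start from the identity $\sqrt{n}\rho_n^2 = a^2$, which holds because $\rho_n = an^{-1/4}$. Hence $\sqrt{n}\rho_n^2\,\tilde\vm_{2V}'A\tilde\vm_{2V} = a^2\,\tilde\vm_{2V}'A\tilde\vm_{2V}$, so the whole power of $n$ cancels and it suffices to establish the non-degenerate limit $\tilde\vm_{2V}'A\tilde\vm_{2V}\pkonv pV(\tilde m_2(X_2))$. Using the explicit form of $A$ in (\ref{def.matrixA}), together with the fact that each index belongs to exactly $p$ of the windows (up to a negligible boundary correction of $O(p^2)$ terms), I would decompose this quadratic form exactly as in (\ref{rel.pf.Delta1}), but with $\tilde m_2(X_{2i})$ in place of $\Delta_{m_1}(\vX_{1i})$ and without the leading $\sqrt n$:
$$\tilde\vm_{2V}'A\tilde\vm_{2V} = \frac{np-1}{n(n-1)p(p-1)}\sum_{i=1}^{n}\Big(\sum_{j\in W_i}\tilde m_2(X_{2j})\Big)^2 - \frac{p}{n(n-1)}\Big(\sum_{i=1}^{n}\tilde m_2(X_{2i})\Big)^2 - \frac{p}{n(p-1)}\sum_{i=1}^{n}\tilde m_2^2(X_{2i}).$$

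Unlike Lemma \ref{lem.pf.3rd.A.14}, where the corresponding quantity $\Delta_{m_1}$ vanished and all three pieces went to zero, here $\tilde m_2$ is a fixed function and two of the three pieces carry a non-zero limit. The second and third pieces need no smoothing argument. For the third piece, the strong law gives $n^{-1}\sum_i\tilde m_2^2(X_{2i})\to E(\tilde m_2^2(X_2)) = V(\tilde m_2(X_2))$ (using $E\tilde m_2(X_2)=0$), so that $-\frac{p}{n(p-1)}\sum_i\tilde m_2^2(X_{2i})\pkonv -\frac{p}{p-1}V(\tilde m_2(X_2))$. For the second piece, the central limit theorem gives $\sum_i\tilde m_2(X_{2i}) = O_p(n^{1/2})$, hence $\frac{p}{n(n-1)}\big(\sum_i\tilde m_2(X_{2i})\big)^2 = \frac{p}{n-1}O_p(1)\pkonv 0$.

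The only piece requiring Lemma \ref{lemma:Op} is the first. Applying it with $g=\tilde m_2$ (Lipschitz by assumption), I would write $\sum_{j\in W_i}\tilde m_2(X_{2j}) = p[\tilde m_2(X_{2i}) + O_p(n^{-1/2})]$ uniformly in $i$, then square and sum to obtain $\sum_i\big(\sum_{j\in W_i}\tilde m_2(X_{2j})\big)^2 = p^2\sum_i\tilde m_2^2(X_{2i}) + O_p(p^2 n^{1/2})$, where the cross and remainder contributions are controlled by the uniformity of the $O_p(n^{-1/2})$ bound together with $\sum_i|\tilde m_2(X_{2i})| = O_p(n)$. Multiplying by the prefactor and writing $\sum_i\tilde m_2^2(X_{2i}) = n\cdot n^{-1}\sum_i\tilde m_2^2(X_{2i})$, the leading constant $\frac{(np-1)p}{(n-1)(p-1)}\to\frac{p^2}{p-1}$ while $n^{-1}\sum_i\tilde m_2^2(X_{2i})\to V(\tilde m_2(X_2))$, yielding $\frac{p^2}{p-1}V(\tilde m_2(X_2))$; the $O_p(p^2 n^{1/2})$ remainder, once multiplied by the prefactor of order $1/(n(p-1))$, is $o_p(1)$. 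Adding the three limits gives $\big(\frac{p^2}{p-1}-\frac{p}{p-1}\big)V(\tilde m_2(X_2)) = pV(\tilde m_2(X_2))$, and multiplying back by $a^2$ completes the proof.

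I expect the main obstacle to be the error bookkeeping in the first piece: one must verify that, after squaring the $O_p(n^{-1/2})$ deviation (which is uniform in $i$) and accumulating $n$ such terms, the resulting error remains $o_p(1)$ after division by $n(p-1)$. This is precisely where the \emph{uniform} rate supplied by Lemma \ref{lemma:Op}, rather than a pointwise one, is indispensable, and where the boundedness of $\tilde m_2$ is used to guarantee $\sum_i|\tilde m_2(X_{2i})| = O_p(n)$. A secondary, purely technical point is the justification of the exact three-term decomposition, namely that each index lies in $p$ windows up to boundary terms; this is the same caveat already present in (\ref{rel.pf.Delta1}) and contributes only an asymptotically negligible correction.
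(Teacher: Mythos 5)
Your proof is correct and takes essentially the same route as the paper's, which simply combines your first and third pieces into the single sample-variance expression $\frac{np}{n-1}\bigl\{n^{-1}\sum_i\tilde m_2^2(X_{2i}) - [n^{-1}\sum_i\tilde m_2(X_{2i})]^2\bigr\} + O_p(n^{-1/2})$ and then applies the same ingredients: the Lemma \ref{lemma:Op} collapse of window sums (via Lipschitz continuity of $\tilde m_2$) and the law of large numbers. One small correction to your justification: $\sum_i|\tilde m_2(X_{2i})| = O_p(n)$ should be deduced from $E|\tilde m_2(X_2)| \le \Var(\tilde m_2(X_2))^{1/2} < \infty$ rather than from boundedness of $\tilde m_2$, which is not among the assumptions (a Lipschitz function need not be bounded).
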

\vspace{-6mm}
\begin{proof} Writing
\begin{eqnarray}
\tilde m_{2V}'A\tilde m_{2V} &=& \frac{np}{n-1}\left\{\left[\frac{1}{n}\sum_{i=1}^{n}\tilde m_2^2(X_{2i})\right] - \left[\frac{1}{n}\sum_{i=1}^{n}\tilde m_2(X_{2i})\right]^2\right\} + O\left(\frac{1}{n^{1/2}}\right) \nonumber \\
&=& p\left\{E\tilde m_2^2(X_2) - [E\tilde m_2(X_2)]^2\right\} + O_p\left(\frac{1}{n^{1/2}}\right), \nonumber
\end{eqnarray}
it follows that
\begin{eqnarray}
\sqrt{n}\rho_n^2 \tilde m_{2V}'A\tilde m_{2V} &=& a^2p\Var(\tilde m_2(X_2)) + O_p\left(\frac{1}{n^{1/2}}\right), \nonumber
\end{eqnarray}
which completes the proof.
\end{proof}
\vspace{-3mm}

\begin{lemma} \label{pf.General.Second} The second term in (\ref{general.terms}) converges in probability to $2 p ab Cov(\tilde m_2(X_2), \tilde m_{12}(\vX_1,X_2))$, i.e.
\vspace{-4mm}
 \bqa
&& \sqrt{n}2\rho_{2n} (\vxi_V - \vDelta_{m_1V} - \rho_{1n}\tilde \vm_{2V})'A\tilde \vm_{12V} \pkonv 2 p ab Cov(\tilde m_2(X_2), \tilde m_{12}(\vX_1,X_2)).
 \eqa
\end{lemma}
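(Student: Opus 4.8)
The plan is to expand the bilinear form and isolate the single non-negligible contribution. Writing $\hat\vxi_V=\vxi_V-\vDelta_{m_1V}+\rho_{1n}\tilde\vm_{2V}+\rho_{2n}\tilde\vm_{12V}$ as in the proof of Theorem \ref{thm:additive_alternative}, the quantity in question is the cross term $2\rho_{2n}u'A\tilde\vm_{12V}$ with $u=\vxi_V-\vDelta_{m_1V}+\rho_{1n}\tilde\vm_{2V}$, which I would split into
$$\sqrt{n}\,2\rho_{2n}\,u'A\tilde\vm_{12V}=T_a-T_b+T_c,$$
where $T_a=\sqrt{n}\,2\rho_{2n}\,\vxi_V'A\tilde\vm_{12V}$, $T_b=\sqrt{n}\,2\rho_{2n}\,\vDelta_{m_1V}'A\tilde\vm_{12V}$, and $T_c=\sqrt{n}\,2\rho_{1n}\rho_{2n}\,\tilde\vm_{2V}'A\tilde\vm_{12V}$. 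Since $\rho_{1n}=an^{-1/4}$ and $\rho_{2n}=bn^{-1/4}$, the prefactor in $T_c$ is $\sqrt{n}\,2\rho_{1n}\rho_{2n}=2ab$, so $T_c=2ab\,\tilde\vm_{2V}'A\tilde\vm_{12V}$; this is exactly where the limit will come from. The goal reduces to showing $T_a\pkonv0$, $T_b\pkonv0$, and $\tilde\vm_{2V}'A\tilde\vm_{12V}\pkonv p\,Cov(\tilde m_2(X_2),\tilde m_{12}(\vX_1,X_2))$.

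For $T_a$ I would reproduce, almost verbatim, the treatment of the $\vxi_V$ part in Lemma \ref{lem.pf.Add.Second}, with $\tilde m_{12}$ playing the role of $\tilde m_2$. Conditionally on $\{\vX_j\}$ the form $\vxi_V'A\tilde\vm_{12V}$ is linear in the $\xi_j$ and has mean zero because $E(\xi_j\mid\vX_j)=0$; in particular each product $\xi_j\tilde m_{12}(\vX_{1k},X_{2k})$ appearing in the direct-sum piece of $A$ has mean zero by independence of distinct observations. A second-moment computation along the lines of (\ref{eqn:prod})--(\ref{3sums}), using Lemma \ref{lemma:Op} and the Lipschitz continuity of $\tilde m_{12}$ in $x_2$ to replace the window averages, gives $\vxi_V'A\tilde\vm_{12V}=O_p(n^{-1/2})$. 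Since $\sqrt{n}\rho_{2n}=bn^{1/4}$, this yields $T_a=O_p(n^{-1/4})\to0$.

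For $T_b$ I would invoke the uniform rate (\ref{kernel.regression.unif.conv.rate}), $\sup_\vx|\Delta_{m_1}(\vx)|=O_p(a_n)$, together with the boundedness of $\tilde m_{12}$ on its compact support. Expanding $\vDelta_{m_1V}'A\tilde\vm_{12V}$ through the three pieces of $A$ and bounding each average of the form $n^{-1}\sum_j\Delta_{m_1}(\vX_{1j})\tilde m_{12}(\cdots)$ by $\sup|\Delta_{m_1}|\sup|\tilde m_{12}|$ shows $\vDelta_{m_1V}'A\tilde\vm_{12V}=O_p(a_n)$. Multiplying by $\sqrt{n}\rho_{2n}=bn^{1/4}$ gives $T_b=O_p(n^{1/4}a_n)$, which vanishes under condition (\ref{cond.d_le_4}): indeed $n\lambda_i^8\to0$ forces $n^{1/4}\lambda^2\to0$, while $n\lambda_i^{2(d-1)}/(\log n)^2\to\infty$ forces $n^{1/4}(\log n/(n\lambda^{d-1}))^{1/2}\to0$, so $n^{1/4}a_n\to0$.

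Finally, for the quadratic form $\tilde\vm_{2V}'A\tilde\vm_{12V}$ I would evaluate the three contributions of $A$ exactly as in Lemma \ref{lem.pf.Add.Third}. The step I expect to be the main obstacle is that $\tilde m_{12}(\vX_{1k},X_{2k})$ genuinely depends on $\vX_{1k}$, so Lemma \ref{lemma:Op} cannot replace a window sum by its centre value as it does for the pure $x_2$-function $\tilde m_2$. The resolution is to observe that for any two indices $j,k$ in a common window $W_i$, Lemma \ref{lemma:Fn} gives $|X_{2j}-X_{2k}|=O_p(n^{-1/2})$ uniformly; the Lipschitz continuity of $\tilde m_{12}$ in $x_2$ uniformly in $\vx_1$ (the extra hypothesis of Part 2) then lets me shift the second argument to a common value and collapse the double window sums $\sum_i\sum_{j\in W_i}\sum_{k\in W_i}$ into diagonal sums in which each original index occurs in $\approx p$ windows. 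Carrying this through, the direct-sum, all-ones, and identity pieces of $A$ contribute $\tfrac{p^2}{p-1}E(\tilde m_2\tilde m_{12})$, $-p\,E(\tilde m_2)E(\tilde m_{12})$, and $-\tfrac{p}{p-1}E(\tilde m_2\tilde m_{12})$ respectively, whose sum is $p[E(\tilde m_2\tilde m_{12})-E(\tilde m_2)E(\tilde m_{12})]=p\,Cov(\tilde m_2(X_2),\tilde m_{12}(\vX_1,X_2))$. Hence $T_c\pkonv2pab\,Cov(\tilde m_2(X_2),\tilde m_{12}(\vX_1,X_2))$, which is the asserted limit.
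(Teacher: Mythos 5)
Your proposal is correct and takes essentially the same approach as the paper: the $\vxi_V$ and $\vDelta_{m_1V}$ cross terms are shown negligible by conditional second-moment bounds and the uniform rate (\ref{kernel.regression.unif.conv.rate}), just as in the paper, and the surviving term $2ab\,\tilde\vm_{2V}'A\tilde\vm_{12V}$ is evaluated by the same window-collapse argument (the paper's $\bar{m}_{2}^{i}$ and $\bar{m}_{12}^{i}$ averages), relying on the same key hypothesis that $\tilde m_{12}$ is Lipschitz in $x_2$ uniformly in $\vx_1$. Note also that you (correctly) read the left factor as $\vxi_V - \vDelta_{m_1V} + \rho_{1n}\tilde\vm_{2V}$, consistent with the representation of $\hat\vxi_V$; the minus sign on $\rho_{1n}\tilde\vm_{2V}$ appearing in (\ref{general.terms}) and in the lemma statement is a typo in the paper, since with that sign the limit would be $-2pab\,Cov(\tilde m_2(X_2),\tilde m_{12}(\vX_1,X_2))$.
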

\begin{proof} By the definition of the matrix $A$, we can write
\bqa
&& \sqrt{n}\rho_{2n}(\vxi_V - \vDelta_{m_1V} - \rho_{1n}\tilde \vm_{2V})'A\tilde \vm_{12V} = \sqrt{n}\rho_{2n}\frac{np-1}{n(n-1)p(p-1)}\times \nonumber \\
&& \times \sum_{i=1}^{n}\left[\sum_{j=1}^{n}\tilde m_{12}(\vX_{1j},X_{2j})I(j \in W_i)\right]\left[\sum_{k=1}^{n}(\xi_k - \Delta_{m_1}(\vX_{1k}) - \rho_{1n}\tilde m_{2}(X_{2k}))I(k \in W_i)\right] \nonumber 
\eqa
\bqan 
&& - \sqrt{n}\rho_{2n}\frac{1}{n(n-1)p}\left[p\sum_{i=1}^{n}\tilde m_{12}(\vX_{1i},X_{2i})\right]\left[p\sum_{i=1}^{n}(\xi_i - \Delta_{m_1}(\vX_{1i}) - \rho_{1n}\tilde m_{2}(X_{2i}))\right] \nonumber \\
&& - \sqrt{n}\rho_{2n}\frac{p}{n(p-1)}\sum_{i=1}^{n}\tilde m_{12}(\vX_{1i},X_{2i})(\xi_i - \Delta_{m_1}(\vX_{1i}) - \rho_{1n}\tilde m_{2}(X_{2i})).  \label{rel.pf.General.Second.terms}
\end{eqnarray}
Noting that $n^{-3/4}\sum_{i=1}^{n}\xi_i \tilde m_{12}(\vX_{1i},X_{2i}) \pkonv 0$, and $\frac{1}{n^{3/4}}\sum_{i=1}^{n}\Delta_{m_1}(\vX_{1i})m_{12}(\vX_{1i},X_{2i}) \pkonv 0$, which follows by arguments similar to (\ref{rel.pf.2ndterm.rel.1st.term}), the third term in (\ref{rel.pf.General.Second.terms}) goes in probability to 
$[pab/(p-1)]E(\tilde m_2(X_2)\tilde m_{12}(\vX_1,X_2))$. Also, using (\ref{rel.pf.2ndterm.rel.1st.term}), and the facts  $E(\tilde m_{12}(\vX_1,X_2)) = 0$, and
$n^{-3/4}\sum_{i=1}^{n}\xi_i = o_p(1)$, the second term in (\ref{rel.pf.General.Second.terms}) goes to $pabE(\tilde m_2(X_2))E(\tilde m_{12}(\vX_1,X_2))$ in probability. Next, the component of the first term in (\ref{rel.pf.General.Second.terms}) that corresponds to
\bqa
\sum_{j=1}^{n}\sum_{k=1}^{n}\sum_{i=1}^{n}\tilde m_{12}(\vX_{1j},X_{2j})(\xi_{k} - \Delta_{m_1}(\vX_{1k}) )I(j \in W_i)I(k \in W_i)
\eqa
goes to zero in probability by arguments similar to those used for the last term in (\ref{rel.pf.General.Second.terms}). Set $\bar{m}_{2}^{i}(X_{2i})=\frac{1}{p}\sum_{j=1}^{n}\tilde m_{2}(X_{2j})I(j \in W_i) $ and $ \bar{m}_{12}^{i}(.,X_{2i})= \frac{1}{p}\sum_{j=1}^{n}\tilde m_{12}(\vX_{1j},X_{2i})I(j \in W_i) $, so that
 \bqa
&& \frac{1}{p}\sum_{j=1}^{n}\tilde m_{12}(\vX_{1j},X_{2j})I(j \in W_i) =  \bar{m}_{12}^{i}(.,X_{2i})+o_p(1),\\
&& \frac{1}{p}\sum_{j=1}^{n}\tilde m_{2}(X_{2j})I(j \in W_i) = \bar{m}_{2}^{i}(X_{2i})+o_p(1).
  \eqa
The remaining component of the first term in (\ref{rel.pf.General.Second.terms})
can be written as
 \bqa
&& \frac{(np-1)ab}{n(n-1)p(p-1)}\sum_{j=1}^{n}\sum_{k=1}^{n}\sum_{i=1}^{n}\tilde m_{12}(\vX_{1j},X_{2j})
 \tilde m_2(X_{2k})I(j \in W_i)I(k \in W_i)\\
&=&\frac{(np-1)pab}{(n-1)(p-1)}\frac{1}{n}\sum_{i=1}^{n}\bar{m}_{12}^{i}(.,X_{2i}) \bar{m}_{2}^{i}(X_{2i})
+o_p(1)\pkonv \frac{p^2b^2}{p-1}E\left[\tilde m_{12}(\vX_1,X_2)\tilde m_2(X_2)\right],
\eqa
completing the proof.
\end{proof}

\begin{lemma} \label{pf.General.Third} The third term in (\ref{general.terms}) converges in probability to $pb^2Var(\tilde m_{12}(\vX_{1},X_{2}))$, i.e.
 \bqa
 \sqrt{n}\rho_{2n}^2 \tilde \vm_{12V}'A\tilde \vm_{12V} \pkonv pb^2Var(\tilde m_{12}(\vX_{1},X_{2})).
 \eqa
\end{lemma}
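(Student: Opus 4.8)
The plan is to mirror Lemma \ref{lem.pf.Add.Third}, which disposes of the structurally identical additive term $\sqrt n\rho_n^2\tilde\vm_{2V}'A\tilde\vm_{2V}$. Since $\rho_{2n}=bn^{-1/4}$ we have $\rho_{2n}^2=b^2n^{-1/2}$, so $\sqrt n\rho_{2n}^2=b^2$ is an exact constant and the claim reduces to proving
\[
\tilde\vm_{12V}'A\tilde\vm_{12V}\pkonv p\,\Var(\tilde m_{12}(\vX_1,X_2)).
\]
Using the three-term form of $A$ in (\ref{def.matrixA}) I would expand this quadratic form, exactly as in (\ref{rel.pf.Delta1}), into a block (``between-cell'') term $\frac{np-1}{n(n-1)p(p-1)}\sum_{i}\big(\sum_{j\in W_i}\tilde m_{12}(\vX_{1j},X_{2j})\big)^2$, a grand-mean term from the $\mathbf J_{np}$ piece, and a within-cell term $\frac{1}{n(p-1)}\sum_{i}\sum_{j\in W_i}\tilde m_{12}^2(\vX_{1j},X_{2j})$ from the $\mathbf I_{np}$ piece.

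The next step collapses the dependence on $X_2$ inside each window. Invoking the additional hypothesis of Part 2 that $\tilde m_{12}(\vx_1,x_2)$ is Lipschitz in $x_2$ uniformly in $\vx_1$, together with Lemma \ref{lemma:Op} (which gives $|F_{X_2}(X_{2j})-F_{X_2}(X_{2i})|=O_p(n^{-1/2})$ for $j\in W_i$), I would replace each $\tilde m_{12}(\vX_{1j},X_{2j})$ by $\tilde m_{12}(\vX_{1j},X_{2i})$ at a uniform cost $O_p(n^{-1/2})$. Since $\tilde m_{12}$ is bounded under the smoothness assumptions inherited from Theorem \ref{thm:main_X1}, these replacement errors are asymptotically negligible, as in Lemma \ref{lem.pf.Add.Third}.

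The main obstacle is that the windowing homogenizes only the $X_2$ coordinate, not the $\vX_1$ coordinate: whereas $\tilde m_2(X_{2j})\approx\tilde m_2(X_{2i})$ is essentially constant over $W_i$, so that its block sum squares to $p^2\tilde m_2^2(X_{2i})$, the values $\tilde m_{12}(\vX_{1j},X_{2i})$ continue to vary in $j$ through $\vX_{1j}$. I would therefore split $\big(\sum_{j\in W_i}\tilde m_{12}(\vX_{1j},X_{2i})\big)^2$ into its diagonal ($j=k$) and off-diagonal ($j\neq k$) parts and argue conditionally on $\{X_{2\ell}\}_{\ell=1}^n$, using that the covariate vectors $\vX_{1j}$ indexing a window are conditionally independent. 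A conditional second-moment/law-of-large-numbers calculation, passed to unconditional convergence by the Dominated Convergence Theorem exactly as in the proof of Theorem \ref{thm:main_X1}, then identifies the three contributions: the diagonal part of the block term is meant to cancel the within-cell term, the grand-mean term is $O_p(p/n)$ after the scaling and hence vanishes, and the surviving between-cell part must be shown to assemble into $p\,\Var(\tilde m_{12}(\vX_1,X_2))$. This last assembly of constants is the crux, and the step I would check most carefully, because it is precisely where the conditional-mean structure $E(\tilde m_{12}(\vX_1,X_2)\mid X_2)$ enters and where the uniform-in-$\vx_1$ Lipschitz hypothesis is indispensable.
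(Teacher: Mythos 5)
Your skeleton coincides with the paper's: the same three-term expansion of the quadratic form (the paper's display (\ref{eq:m12VAm12})), the same use of Lemma \ref{lemma:Op} plus the uniform-in-$\vx_1$ Lipschitz hypothesis to replace $\tilde m_{12}(\vX_{1j},X_{2j})$ by $\tilde m_{12}(\vX_{1j},X_{2i})$ for $j\in W_i$, and your bookkeeping is correct as far as it goes: the diagonal ($j=k$) part of the between-cell term does cancel the within-cell term $-\frac{pb^2}{n(p-1)}\sum_i\tilde m_{12}(\vX_{1i},X_{2i})^2$ in the limit. The genuine gap is exactly the step you postpone: you never establish that the off-diagonal part assembles into $pb^2\Var(\tilde m_{12}(\vX_1,X_2))$, and in fact it does not. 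Running the conditional argument you propose --- given $\{X_{2\ell}\}_{\ell=1}^{n}$ the vectors $\vX_{1j}$, $j\in W_i$, are independent with laws $F_{\vX_1|X_2=X_{2j}}$, and $X_{2j}\approx X_{2i}$ on $W_i$ --- each cross term with $j\neq k$ has conditional mean approximately $g(X_{2i})^2$, where $g(x_2)=E\left[\tilde m_{12}(\vX_1,X_2)\,|\,X_2=x_2\right]$, so that
\begin{equation*}
\frac{(np-1)b^2}{n(n-1)p(p-1)}\sum_{i=1}^{n}\sum_{j,k\in W_i,\,j\neq k}\tilde m_{12}(\vX_{1j},X_{2i})\,\tilde m_{12}(\vX_{1k},X_{2i})\ \pkonv\ p\,b^2\,E\left[g(X_2)^2\right].
\end{equation*}
Combined with the diagonal cancellation and the grand-mean term, your program therefore terminates at $pb^2\Var\left(E[\tilde m_{12}(\vX_1,X_2)|X_2]\right)$, not at $pb^2\Var(\tilde m_{12}(\vX_1,X_2))$. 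By the law of total variance the two agree only if $E\left[\Var(\tilde m_{12}(\vX_1,X_2)|X_2)\right]=0$, i.e., only if $\tilde m_{12}$ is essentially a function of $x_2$ alone; in the benchmark case where $\vX_1$ and $X_2$ are independent, the ANOVA constraint $\int\tilde m_{12}(\vx_1,x_2)\,dF_{\vX_1}(\vx_1)=0$ forces $g\equiv 0$ and the limit is zero.

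You should also know how the paper itself handles the step you flagged as the crux: it doesn't. After introducing $\bar m^{i}_{12}(.,X_{2i})=\frac{1}{p}\sum_{j}\tilde m_{12}(\vX_{1j},X_{2i})I(j\in W_i)$, the paper simply asserts that $\frac{(np-1)pb^2}{n(n-1)(p-1)}\sum_{i}\left[\bar m^{i}_{12}(.,X_{2i})\right]^2$ converges in probability to $\frac{p^2b^2}{p-1}E\left[\tilde m_{12}(\vX_1,X_2)^2\right]$, thereby equating the average of squared window means with the average of squared individual values. That identification is legitimate in Lemma \ref{lem.pf.Add.Third}, because $\tilde m_2$ depends on $X_2$ only and the nearest-neighbor windows homogenize $X_2$; it is precisely the homogenization over $\vX_1$ which, as you correctly observe, those windows cannot provide for $\tilde m_{12}$. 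So your proposal is incomplete at its self-declared crux, and the gap is not closable along either your lines or the paper's: the honest limit of this term is $pb^2\Var\left(E[\tilde m_{12}(\vX_1,X_2)|X_2]\right)$, a discrepancy that propagates to the interaction component of the drift $\mu^G$ in Theorem \ref{thm:additive_alternative}. Your skepticism at this step was better placed than the paper's confidence.
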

\begin{proof} Note that we can write $\sqrt{n}\rho_{2n}^2\tilde \vm_{12V}'A\tilde \vm_{12V}$ as
\begin{eqnarray} \label{eq:m12VAm12}
   && \frac{(np-1)b^2}{n(n-1)p(p-1)} \sum_{i=1}^{n}\left[\sum_{j=1}^{n}\tilde m_{12}(\vX_{1j},X_{2j})I(j \in W_i)\right]\left[\sum_{k=1}^{n}\tilde m_{12}(\vX_{1k},X_{2k}))I(k \in W_i)\right] \nonumber \\
&& -\frac{pb^2}{n(n-1)}\left[\sum_{i=1}^{n}\tilde m_{12}(\vX_{1i},X_{2i})\right]\left[\sum_{i=1}^{n}\tilde m_{12}(\vX_{1i},X_{2i})\right] \nonumber \\
&& - \frac{pb^2}{n(p-1)}\sum_{i=1}^{n}\tilde m_{12}(\vX_{1i},X_{2i})^2.
\end{eqnarray}
Clearly, the third term in (\ref{eq:m12VAm12}) goes to $[pb^2/(p-1)]E[\tilde m_{12}(\vX_1,X_2)^2]$ in probability, and the second term in (\ref{eq:m12VAm12}) goes to $pb^2[E(\tilde m_{12}(\vX_1,X_2))]^2$ in probability. Using the same notation as in lemma \ref{pf.General.Second}, the first term in (\ref{eq:m12VAm12}) is equal to
\begin{eqnarray} 
 \frac{(np-1)pb^2}{n(n-1)(p-1)} \sum_{i=1}^{n}\left[\bar{m}^{i}_{12}(.,X_{2i})\right]^2 + o_p(1) \pkonv \frac{p^2b^2}{p-1}E[(\tilde m_{12}(\vX_{1i},X_{2i}))^2], \nonumber
\end{eqnarray}
completing the proof.
\end{proof}

\vspace{-6mm}

\renewcommand{\baselinestretch}{1.00}
\baselineskip=14pt

\end{document}